\documentclass[11pt,a4paper,reqno]{amsart}
\usepackage[hmargin={2.3cm,2.3cm},vmargin={2.5cm,2.5cm},centering]{geometry}
\usepackage{amssymb,amsmath,amsthm,amsfonts,amscd}

\usepackage{graphicx}
\usepackage[dvipsnames]{xcolor}
\usepackage{mathrsfs}
\usepackage{hyperref}
\usepackage{tikz-cd}
\usepackage[utf8]{inputenc}
\usepackage[english]{babel}
\usepackage{dsfont}
\usepackage{bm,bbm}
\usepackage[longnamesfirst,numbers,sort&compress]{natbib}


\usepackage{enumerate}
\usepackage{setspace}
\usepackage[font={scriptsize,bf}]{caption}
\usepackage{changepage}
\usepackage{booktabs}
\usepackage[sort&compress,capitalise,nameinlink]{cleveref}
\crefname{section}{\textsection}{\textsection}
\crefname{subsection}{\textsection}{\textsection}
\crefname{subsubsection}{\textsection}{\textsection}
\crefname{paragraph}{\textparagraph}{\textparagraph}
\crefname{thm}{Theorem}{Theorem}

\newtheorem{theorem}{Theorem}[section]

\newtheorem{proposition}[theorem]{Proposition}
\newtheorem{assumption}[theorem]{Asumption}
\newtheorem{corollary}[theorem]{Corollary}

\theoremstyle{definition}
\newtheorem{definition}[theorem]{Definition}

\theoremstyle{remark}




\renewcommand\phi{\varphi}

\newcommand{\cH}{\mathcal{H}}

\newcommand{\g}{\gamma}

\renewcommand{\epsilon}{\varepsilon}

\renewcommand{\ge}{\geqslant}
\renewcommand{\le}{\leqslant}
\renewcommand{\geq}{\geqslant}
\renewcommand{\leq}{\leqslant}

\renewcommand{\tilde}{\widetilde}




\numberwithin{equation}{section}

\begin{document}

\title{On some rigorous aspects of fragmented condensation}

\author[D.\ Dimonte]{Daniele Dimonte}

\address{Universität Basel\\ Spiegelga{\ss}e 5\\4051 Basel}

\email{daniele.dimonte@unibas.ch}

\urladdr{daniele.dimonte.it}

\author[M.\ Falconi]{Marco Falconi}

\address{Mathematics Department\\Università di Roma Tre\\Largo San Leonardo Murialdo
  1/C\\00146 Roma}

\email{mfalconi@mat.uniroma3.it}

\urladdr{http://ricerca.mat.uniroma3.it/users/mfalconi/}

\author[A.\ Olgiati]{Alessandro Olgiati}

\address{ CNRS Grenoble\\ LPMMC\\ F-38000 Grenoble}

\email{alessandro.olgiati@lpmmc.cnrs.fr}



\date{October 11, 2019}

\thanks{The final and revised version of this manuscript, dated July 2020,
  will appear in
  \href{https://iopscience.iop.org/journal/0951-7715}{Nonlinearity}. In
  compliance with the journal's rules, it will be made available on arXiv 12
  months after publication.}

\begin{abstract}
  In this paper we discuss some aspects of fragmented condensation from a mathematical perspective. We
  first propose a simple way of characterizing finite fragmentation. Then, inspired by recent results
  of semiclassical analysis applied to bosonic systems with infinitely many degrees of freedom, we address
  the problem of persistence of fragmented condensation. We show that the latter occurs in interacting
  systems, in the mean-field regime, and in the limit of large gap of the one-body Hamiltonian.
\end{abstract}
\maketitle


\section{Introduction}
\label{sec:introduction}

The phenomenon of fragmented Bose-Einstein condensation (fragmented BEC) has attracted a lot of attention
in recent years, both from an experimental and a theoretical point of view
\cite{Mueller2006,leggett2008oup,bader2009prl,fischer2010pra,kang2014prl}. The physical idea of
fragmentation is that in some cases the condensed fraction of a bosonic system is distributed among
multiple, finitely or infinitely many, one single-particle states. This is in contrast with the occurrence
of ordinary, or simple, Bose-Einstein condensation in which a single one-particle orbital is
macroscopically occupied. In this paper we discuss two aspects of fragmented condensation: its
mathematical definition and its persistence under time evolution.

An intuitive, and very common (see, \emph{e.g.}, \cite{leggett2008oup}), definition of finite fragmentation is
the following one. Consider a system of $N$ bosons in a quantum state with density matrix $\gamma_N$,
normalized so that $\mathrm{Tr}\gamma_N=1$, and assume that the associated reduced one-particle density matrix
(1-RDM) $\gamma^{(1)}_N$ converges, as $N\to \infty$ (in a suitable topology), to a limit one-particle density matrix
$\gamma^{(1)}_{\infty}$. It is then customary to say that $\gamma_N$ exhibits fragmented condensation if $\gamma_\infty^{(1)}$ has
\emph{a finite rank strictly larger than one}.

Despite its simplicity, the above definition is too broad, namely, it includes states that cannot be
physically interpreted as fragmented. We will exhibit in Section \ref{sec:char-fragm-cond} a class of
states that are statistical mixtures of ordinary Bose-Einstein condensates but whose effective
one-particle reduced density matrices have rank two or more. Hence, the above definition also includes
states in which, with probability one, the system is observed as an ordinary BEC. In order to exclude at
least these cases, it is therefore desirable to provide a different definition/characterization of finite
fragmented condensation.

We propose in Definition \ref{def:fragm_nostra} a simple characterization of finite fragmentation that
distinguishes between fragmented condensation and statistical mixtures of simply condensed states: we will
say that a state exhibits fragmented condensation if the rank of $\gamma^{(1)}_{\infty}$ is two or more, and the
rank of the effective $p$-particle reduced density matrix ($p$-RDM) $\gamma^{(p)}_{\infty}$ is a non-constant
function of $p$. We justify in Section \ref{sec:char-fragm-cond} this characterization by comparing the
behavior of the effective $p$-particle reduced density matrix of zero-temperature fragmented states with
the ones of statistical mixtures of ordinary condensates.


Another interesting aspect of finite fragmented condensation is its persistence in time under the effect
of interactions. It is well-known that ordinary BEC is preserved in time if the pair interaction among
particles is proportional to the inverse of the number of particles (mean-field, or Hartree,
regime)~\cite{ginibre1979cmpI,ginibre1979cmp2,rodnianski2009cmp,knowles2010cmp,Benedikter-Porta-Schlein-2015}. This
means very weak interactions, but occurring on macroscopic scales. In such a regime
the effective equation ruling the evolution of the condensate wave-function is the non-local Hartree
equation. Simple BEC is known to be preserved even in the physically more relevant ultra-dilute
(Gross-Pitaevskii)
regime~\cite{ErdSchYau-10,Pickl-10,pickl2015rmp,benedikter2015cpam,brennecke2017arxiv}. In this case the
scaling prescription assigns a coupling constant to the pair interaction proportional to $N^2$, while the
interaction range shrinks as $N^{-1}$. Interactions happen therefore on short scales, and they are very
strong. The effective equation ruling the evolution of the condensate wave-function is the celebrated
Gross-Pitaevskii equation, \emph{i.e.}, a local cubic non-linear Schr\"odinger equation that takes into
account inter-particle correlations.

Treating fragmented condensation with mean-field techniques is, in turn, more complicated. This is due to
the fact that fragmented states are intrinsically more correlated. A set of effective evolution equations
were formally derived in \cite{alon2007plA,alon2008prA} by imposing that, in the large $N$ limit, finite
fragmentation persists at any time, with a fixed number of one-particle states. This idea seems close to
the so-called Dirac-Frenkel principle used in mathematical physics~\cite{Benedikter2018} and numerical
analysis~\cite{lubich2008zlam}. Unfortunately, the error that one commits when imposing that fragmentation
with the same number of states holds at any time is in general not converging to zero as $N\to \infty$. This can
be deduced as a consequence of a series of rigorous results on the mean-field effective evolution of
generic many-body states in the Hartree regime
\cite{ammari2008ahp,ammari2009jmp,ammari2011jmpa,ammari2015asns,ammari2016cms}. In fact, it follows from
these results that finite fragmentation is in general destroyed by an interaction in the mean field
regime: for almost all times the time-evolved effective one-particle reduced density matrix
$\gamma^{(1)}_{\infty}(t)$ has \emph{infinite rank}, even if the initial datum has $\gamma^{(1)}_\infty(0)$ finite rank
(strictly larger than one).

As a first step towards addressing these difficulties, we prove in Theorem \ref{thm:convergence} that
finite fragmented condensation is preserved in interacting systems in the mean-field regime which present
a very large energy gap between the degenerate ground states and the first excited states of the one-body
Hamiltonian; this requires the introduction of a further parameter $\omega$ in the theory (the first one being
the number of particles $N$). Having a large gap prevents the spreading of the wavefunction on the whole
Hilbert space, effectively constraining the system to a finite dimensional subspace, and thus preserving
finite fragmentation. We will show that there is a well-defined limit theory
describing the system as $\omega\to\infty$ and $N\to\infty$, and we will explicitly exhibit the effective equations for the
one-particle states on which the system condenses.

Natural follow-up problems would be the study of fragmented condensation at finite inverse temperature in
the thermodynamic limit, and the proof of persistence of fragmented condensation in the dilute regime and
with large gap. Both those problems would however require different mathematical techniques from the ones
used in this paper. We plan to address them in future works.

The rest of the paper is organized as follows. In Section \ref{sec:char-fragm-cond} we propose and justify
our definition of fragmented BEC. In Section \ref{sec:pers-finite-fragm} we present the precise setting
and assumptions for our result on persistence of fragmented BEC in the large gap limit, and state our main
result, Theorem \ref{thm:convergence}. Section \ref{sec:proof-crefthm:convergence1} and Section
\ref{sec:proof-crefthm:convergence2} contain the proof of Theorem \ref{thm:convergence}. Section
\ref{sec:proof-crefthm:matrices} contains the proof of Proposition \ref{prop:matrices} in which we compute
the effective $p$-RDM of particular fragmented states. Finally, in Appendix \ref{sec:mean-field-descr} we
outline the applications of semiclassical techniques to many-body bosonic systems that are most relevant
to our analysis.

\section{Characterization of fragmented condensation}
\label{sec:char-fragm-cond}

In this Section we propose and justify a characterization of finite fragmented condensation which, while
differing slightly from the one commonly adopted, has the advantage of excluding statistical mixtures of
ordinary BECs.

\subsection{Definition of finite fragmented BEC}
Let us consider a bosonic system of $N$ particles with one-body Hilbert space $\mathcal{H}$ and many-body
Hilbert space $\mathcal{H}_N=\mathcal{H}^{\otimes_\mathrm{sym} N}$. For any $p\in\mathbb{N}$, we will
denote by $\mathcal{H}_p$ the $p$-particle Hilbert space $\mathcal{H}^{\otimes_\mathrm{sym} p}$. Let
$\gamma_N$ be a quantum state, \emph{i.e.}, a positive, trace-class, operator on $\mathcal{H}_N$ with unit trace.
For any $p=1,\dots,N$, we define the $p$-particle reduced density matrix ($p$-RDM) $\gamma_N^{(p)}$
associated with $\gamma_N$ to be the positive, trace-class operator on $\mathcal{H}_p$ which satisfies
\begin{equation*}
  \mathrm{Tr}_{\mathcal{H}_p} \big( A\gamma^{(p)}_N\big)=\mathrm{Tr}_{\mathcal{H}_N}\big(A\gamma_N\big)
\end{equation*}
for any bounded operator $A$ on $\mathcal{H}_p$.

The physical notion of Bose-Einstein condensation is that of macroscopic occupation of one-body
orbitals. To make this mathematically more precise we introduce the limit, or effective, reduced
marginals. Even though, for any fixed $p$, the sequence of $\gamma_N^{(p)}$ do not in general converge as
$N\to\infty$, it is however always possible to extract a subsequence such that there exists the limit
\begin{equation} \label{eq:limit_marginal}
  \gamma_\infty^{(p)}:=\lim_{j\to\infty} \gamma^{(p)}_{N_j}
\end{equation}
in the weak-* sense. By a diagonal procedure we can actually ensure that convergence occurs for all $p$'s
along the same subsequence. It follows that for each $p$ the operator $\gamma_\infty^{(p)}$ is postive and
trace-class in $\mathcal{H}_p$. In the concrete situations that we consider in this work we will actually
have strong convergence in trace norm for the whole sequence, which ensures that
$\mathrm{Tr}\gamma_\infty^{(p)}=1$. The concept of BEC is then usually mathematically expressed as
follows.

\begin{definition}[\textbf{Finite Bose-Einstein condensation \`a la Penrose-Onsager~
\cite{penrose1956pr,leggett2008oup}}]\mbox{}\\ \label{def:penr_ons}
A bosonic system in the (sequence of) state(s) $\gamma_N$ is said to exhibit finite Bose-Einstein
condensation if for each $p\in\mathbb{N}$ the limit \eqref{eq:limit_marginal} exists as a positive
trace-class operator on $\mathcal{H}_p$ \emph{and}
\begin{itemize}
\item $\gamma_\infty^{(1)}$ has rank one, in which case the BEC is \emph{simple};
\item $\gamma_\infty^{(1)}$ has finite rank strictly larger than one, in which chase the BEC is
  \emph{fragmented}.
\end{itemize}
\end{definition}
It is possible to interpret the hierarchy of effective marginals $\gamma_\infty^{(p)}$ with the help of a measure
$\mu$ on the one-particle Hilbert space $\mathcal{H}$ (see Appendix \ref{sec:mean-field-descr} for more
details, in particular \eqref{eq:5} and the ensuing discussion). More precisely,
there exists a positive measure $\mu$ on $\mathcal{H}$, independent on $p$, such that
\begin{equation} \label{eq:conv_comb} \gamma_{\infty}^{(p)}=\int_{\mathcal{H}}^{}\, \bigl\lvert
  \,\underbrace{\varphi\otimes \dotsm\otimes \varphi}_{p} \,\bigr\rangle\bigl\langle\,
  \underbrace{\varphi\otimes \dotsm\otimes \varphi}_{p} \,\bigr\rvert \,\mathrm{d}\mu(\varphi)\; .
\end{equation}
and $\int_\mathcal{H}d\mu(\varphi)= 1$. According to this interpretation, the measure of a simple condensate is
either concentrated on a single point $\varphi_0\in\mathcal{H}$, or it is a convex combination of measures
concentrated on points differing from each other only by a phase. These are indeed the only measures
yielding
\begin{equation*}
  \gamma_{\infty,\mathrm{simp}}^{(1)}=\lvert \,\varphi_0 \,\rangle\langle\, \varphi_0 \,\rvert\; ,
\end{equation*}
in \eqref{eq:conv_comb}. It follows that, for a simple condensate,
\begin{equation*}
  \gamma_{\infty,\mathrm{simp}}^{(p)}=\bigl\lvert \,\underbrace{\varphi_0\otimes \dotsm\otimes \varphi_0}_{p} \,\bigr\rangle\bigl\langle\, \underbrace{\varphi_0\otimes \dotsm\otimes \varphi_0}_{p} \,\bigr\rvert\;,
\end{equation*}
i.e., all effective marginals have rank one.

Definition \ref{def:penr_ons}, while being perfectly satisfactory in the case of simple BEC, is however
\emph{not sufficient} to characterize fragmented condensates. We propose instead the following definition.

\begin{definition}[\textbf{Finite fragmented Bose-Einstein condensation}]\mbox{}\\ \label{def:fragm_nostra}
  A bosonic system in the (sequence of) state(s) $\gamma_N$ is said to exhibit finite fragmented Bose-Einstein
  condensation if the limit \eqref{eq:limit_marginal} exists for each $p$ as a positive trace-class
  operator on $\mathcal{H}_p$, the rank $R(p)$ of $\gamma^{(p)}_\infty$ is a non-constant finite function of $p$, and
  $R(1)\ge2$.
\end{definition}
We will motivate the validity of Definition \ref{def:fragm_nostra} in the rest of this Section.

\subsection{Effective $p$-RDM's of fragmented states and of mixtures of simple BEC's}

In order to justify Definition \ref{def:fragm_nostra} we will explicitly exhibit states which:
\begin{itemize}
\item Should not be reasonably interpreted as describing a fragmented condensate;
\item Would be admitted by Definition \ref{def:penr_ons};
\item Are \emph{not} admitted by our Definition \ref{def:fragm_nostra}.
\end{itemize}
We will also show that Definition \ref{def:fragm_nostra} is satisfied by a class of states that are
certainly interpreted as fragmented.

\begin{assumption}[\textbf{One-particle states and relative weights}]\mbox{}\\ \label{assum:one_body}
  Let $\{\varphi_k\}_k$ be a set of orthonormal one particle wave functions in $\mathcal{H}$ for $k=1,\dots,M$
  with some fixed $M\in\mathbb{N}$. Consider a sequence of integers $\bigl(f_k(N)\bigr)_{N\in \mathbb{N}}$ such
  that
  \begin{equation} \label{eq:f_pi} \sum_{k=1}^Mf_k(N)=N,\;\;\forall
    N\qquad\text{and}\qquad\exists\,\pi_k:=\lim_{N\to\infty}\frac{f_k(N)}{N}\in[0,1].
  \end{equation}
  Correspondingly, define the set of indices corresponding to non-zero $\pi_k$'s
  \begin{equation} \label{eq:P} P=\Bigl\{k=1,\dotsc,M\;|\; \pi_k\neq 0\Bigr\},
  \end{equation}
  with cardinality
  \begin{equation}
    \label{eq:18}
    \Pi:= \mathrm{Card}(P)\leq M.
  \end{equation}
\end{assumption}

For $\{\varphi_k,f_k(N)\}_{k=1}^M$ as in Assumption \ref{assum:one_body}, define the statistical mixture
of pure condensates
\begin{equation}
  \label{eq:16}
  \gamma_{N,\mathrm{stat}}=\sum_{k=1}^{M}\dfrac{f_k(N)}{N}\,\bigl\lvert\, \underbrace{\varphi_k\otimes \dotsm\otimes \varphi_k}_N\, \bigr\rangle\bigl\langle\,\underbrace{\varphi_k\otimes \dotsm\otimes \varphi_k}_N\,\bigr\rvert.
\end{equation}
This state will be observed with probability $f_k(N)/N$ as the ordinary BEC $\varphi_k^{\otimes N}$, for
$k=1,\dots,M$. It is therefore, to our understanding, \emph{not interpretable as a fragmented condensate}.

We can, on the other hand, define a genuinly fragmented state whose 1-RDM coincide with that of
\eqref{eq:16}. For any $\varphi\in\mathcal{H}_m$ and $\theta\in\mathcal{H}_n$, we denote by $\varphi\vee \theta\in \mathcal{H}_{m+n}$
their symmetric tensor product. Given $\{\varphi_k,f_k(N)\}_{k=1}^M$ as in Assumption \ref{assum:one_body},
define the pure state
\begin{equation} \label{eq:gamma_frag}
  \begin{split}
    \gamma_{N,\mathrm{frag}}=\;& \bigl\lvert\, (\underbrace{\varphi_1\otimes \dotsm\otimes \varphi_1}_{f_1(N)})\vee\dotsc\vee (\underbrace{\varphi_{2s+1}\otimes \dotsm\otimes \varphi_{2s+1}}_{f_{2s+1}(N)})\,\bigr\rangle\\
    &\quad \bigl\langle\, (\underbrace{\varphi_1\otimes \dotsm\otimes \varphi_1}_{f_1(N)})\vee\dotsc\vee (\underbrace{\varphi_{2s+1}\otimes \dotsm\otimes \varphi_{2s+1}}_{f_{2s+1}(N)})\,\bigr\rvert\;.
  \end{split}
\end{equation}
For each $k$, exactly $f_k(N)$ particles in the state $\gamma_{N,\mathrm{frag}}$ will be observed in the
one-body state $\varphi_k$.  We will study and compare the p-RDM of $\gamma_{N,\mathrm{stat}}$ and
$\gamma_{N,\mathrm{frag}}$ in detail below.

It is worth remarking that fragmented states analogous to $\gamma_{N,\mathrm{frag}}$ have been considered
in~\cite{Rougerie2016}, where the authors analyze a system of interacting bosons trapped by a suitably
scaled double-well confining potential. If one denotes by $u_{\mathrm{l}}$ and $u_{\mathrm{r}}$ the ground
state wavefunctions localized in the left and right well respectively, then it is expected for bosons to
form a fragmented condensate of the form $u_{\mathrm{l}}^{\otimes \frac{N}{2}}\vee u_{\mathrm{r}}^{\otimes \frac{N}{2}}$,
provided that the relative distance of the two wells gets sufficiently large as $N\to \infty$. This is to be
compared with the opposite situation, in which the distance does not grow fast enough, and therefore
particles tend to form a simple condensate of the type
$(\frac{u_{\mathrm{l}}+u_{\mathrm{r}}}{\sqrt{2}})^{\otimes N}$. The authors prove that the fragmented behavior
is energetically favored, as expected, in the suitable scaling regime.

The effective marginals of $\gamma_{N,\mathrm{stat}}$ are obtained as follows. Let $\mu_k$ be the limit
measure corresponding to the simple condensate $\varphi_k^{\otimes N}$. It follows that the measure
$\mu_\mathrm{stat}$ associated to $\gamma_{N,\mathrm{stat}}$ is
\begin{equation*}
  \mu_\mathrm{stat}=\sum_{k=1}^{M}\pi_k\mu_k,
\end{equation*}
and therefore the effective $p$-RDM of $\gamma_\infty^{(p)}$ are of the form
\begin{equation} \label{eq:infty_p_stat} \gamma_{\infty,\mathrm{stat}}^{(p)}=\sum_{k=1}^{M}\pi_k
  \bigl\lvert \,\underbrace{\varphi_k\otimes \dotsm\otimes \varphi_k}_{p} \,\bigr\rangle\bigl\langle\,
  \underbrace{\varphi_k\otimes \dotsm\otimes \varphi_k}_{p} \,\bigr\rvert.
\end{equation}
Notice that due to \eqref{eq:f_pi} we have $\mathrm{Tr}\,\gamma_{\infty,\mathrm{stat}}^{(p)}=1$. This
proves the following Proposition.

\begin{proposition}[\textbf{Rank of effective density matrices of statistical mixtures}]\mbox{}\\
  \label{prop:1}
  Given $\{\varphi_k,f_k(N)\}_{k=1}^{M}$ as in Assumption \ref{assum:one_body}, let us form the state
  $\gamma_{N,\mathrm{stat}}$ as in \eqref{eq:16}. Let $R_{\mathrm{stat}}(p)$ be the rank of the effective
  p-RDM $\gamma^{(p)}_{\infty,\mathrm{stat}}$ associated with $\gamma_{N,\mathrm{stat}}$. Then
  $R_{\mathrm{stat}}(p)$ is a constant function of $p\geq 1$ which coincides with the cardinality $\Pi$ of
  $P$ from \eqref{eq:P} and \eqref{eq:18}
\end{proposition}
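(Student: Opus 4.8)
The plan is to compute directly the rank of each operator $\gamma_{\infty,\mathrm{stat}}^{(p)}$ from its explicit form in \eqref{eq:infty_p_stat}. First I would observe that $\gamma_{\infty,\mathrm{stat}}^{(p)}$ is written as a sum $\sum_{k=1}^M \pi_k \lvert \varphi_k^{\otimes p}\rangle\langle \varphi_k^{\otimes p}\rvert$, and that by the definition \eqref{eq:P}--\eqref{eq:18} the only terms that actually appear with nonzero coefficient are those with $k\in P$, of which there are exactly $\Pi$. Thus, after discarding the vanishing terms, $\gamma_{\infty,\mathrm{stat}}^{(p)}$ is a sum of $\Pi$ rank-one projectors, each weighted by a strictly positive $\pi_k$; in particular its rank is at most $\Pi$ for every $p\geq1$.

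Next I would show the rank is exactly $\Pi$, which amounts to checking that the family of vectors $\{\varphi_k^{\otimes p}\}_{k\in P}$ is linearly independent in $\mathcal{H}_p$. This follows from the orthonormality of the $\varphi_k$'s assumed in Assumption \ref{assum:one_body}: the Gram matrix of the vectors $\varphi_k^{\otimes p}$ has entries $\langle \varphi_j^{\otimes p},\varphi_k^{\otimes p}\rangle = \langle\varphi_j,\varphi_k\rangle^p = \delta_{jk}$, so it is the identity, hence invertible, hence the vectors are linearly independent (indeed orthonormal). Since a sum of rank-one projectors onto linearly independent vectors with strictly positive weights has rank equal to the number of terms, we get $R_{\mathrm{stat}}(p)=\Pi$ for every $p\geq 1$. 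The fact that this value does not depend on $p$ is then immediate, and the identification with the cardinality of $P$ from \eqref{eq:P}--\eqref{eq:18} is exactly what we have shown.

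There is no real obstacle here: the only point requiring a word is the justification of the formula \eqref{eq:infty_p_stat} for $\gamma_{\infty,\mathrm{stat}}^{(p)}$ itself, i.e., that the $p$-RDM of the finite-$N$ mixture $\gamma_{N,\mathrm{stat}}$ in \eqref{eq:16} converges to that expression. This is handled in the discussion preceding the Proposition via the limit measure $\mu_{\mathrm{stat}}=\sum_k \pi_k\mu_k$ and the representation \eqref{eq:conv_comb}; alternatively one checks it by hand, computing $\mathrm{Tr}_{\mathcal{H}_p}(A\,\gamma_{N,\mathrm{stat}}^{(p)})$ against a bounded test operator $A$ and passing to the limit using $f_k(N)/N\to\pi_k$. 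Given that, the combinatorial/linear-algebra argument above closes the proof.
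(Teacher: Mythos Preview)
Your proposal is correct and essentially identical to the paper's argument: the paper derives \eqref{eq:infty_p_stat} and states immediately that this formula ``proves the following Proposition,'' leaving implicit precisely the linear-algebra observation you spell out (that the $\Pi$ nonvanishing terms project onto the orthonormal family $\{\varphi_k^{\otimes p}\}_{k\in P}$, whence the rank is exactly $\Pi$ for every $p$). There is no methodological difference to report.
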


In order to compute the effective p-RDM of $\gamma_{N, \mathrm{frag}}$, let $\mu_{\mathrm{frag}}$ be the
probability measure corresponding, in the limit $N\to \infty$, to $\gamma_{N, \mathrm{frag}}$. Let $P$ be as in
\eqref{eq:P} the set of indices corresponding to non-zero $\pi_k$'s and let us define, for any one particle
wave function $\varphi\in \mathcal{H}$, $\delta^{S^1}_u$ as the convex combination of delta measures
\begin{equation*}
  \delta^{S^1}_\varphi=\frac{1}{2\pi}\int_0^{2\pi}\delta_{e^{i\theta}\varphi}  \mathrm{d}\theta.
\end{equation*}
Then (see, \emph{e.g.}, \cite{ammari2016cms})
\begin{equation}
  \label{eq:19}
  \mu_{\mathrm{frag}}=\bigotimes_{k\in P} \delta^{S^1}_{\sqrt{\pi_k}\varphi_k}\otimes \delta^{\perp}_0\; ,
\end{equation}
where $\delta^{\perp}_0$ is the delta measure concentrated in zero and acting on the orthogonal complement
$\mathcal{H}_{\varphi_k}^{\perp}$ of the linear span
\begin{equation*}
  \mathcal{H}_{\varphi_k}= \mathrm{span}_{\mathbb{C}}\bigl\{\varphi_k\;|\; k\in P\bigr\}\; .
\end{equation*}
The crucial feature of the measure $\mu_\mathrm{frag}$ is that it is a measure on a \emph{polycircle},
\emph{i.e.}  it involves the average over more than one independent phase. In fact,
in simple BEC ($\Pi=1$), the $\mathrm{U}(1)$-invariance of the theory masks the effect of the single phase:
indeed, the corresponding marginals coincide with those yielded by a measure concentrated on the
single vector $\varphi_k$ surviving in the limit.

As a consequence, the explicit form of the $p$-RDM
\begin{equation} \label{eq:frag_infty_p_matrix} \gamma_{\infty,\mathrm{frag}}^{(p)}=\int_{\mathcal{H}}^{}\, \bigl\lvert \,\underbrace{\varphi\otimes \dotsm\otimes\varphi}_{p} \,\bigr\rangle\bigl\langle\, \underbrace{\varphi\otimes \dotsm\otimes \varphi}_{p}\,\bigr\rvert\,\mathrm{d}\mu_{\mathrm{frag}}(\varphi)
\end{equation}
associated with $\gamma_{N,\mathrm{frag}}$ is more complicated than that of $\gamma^{(p)}_{\infty,\mathrm{stat}}$. As an
example, the first marginal can be written as follows:
\begin{equation}
  \label{eq:2}
  \gamma_{\infty,\mathrm{frag}}^{(1)}=\frac{1}{(2\pi)^{\Pi}}\int_0^{2\pi}\bigl\lvert\psi_{\theta}\bigr\rangle \bigl\langle \psi_{\theta}\bigr\rvert  \prod_{k\in P}\mathrm{d}\theta_k\;,
\end{equation}
where
\begin{equation}
  \label{eq:3}
  \psi_{\theta}=\sum_{k\in P}^{}e^{i\theta_k}\sqrt{\pi_k}\varphi_k\; .
\end{equation}
The above discussion is summarized by the following Proposition.

\begin{proposition}[\textbf{Effective density matrices of fragmented states}]\mbox{}\\
  \label{prop:matrices}
  Given $\{\varphi_k,f_k(N)\}_{k=1}^{M}$ as in Assumption \ref{assum:one_body}, let us form the state
  $\gamma_{N,\mathrm{frag}}$ as in \eqref{eq:gamma_frag}. Let $\gamma_{\infty,\mathrm{frag}}^{(p)}$,
  $p\geq 1$ be, as in \eqref{eq:frag_infty_p_matrix}, the effective $p$-RDM associated with
  $\gamma_{\infty,\mathrm{frag}}^{(p)}$. Define moreover, for any $\alpha\geq 1$,
  \begin{equation*}
    \begin{split}
      \mathcal{F} _{p,\alpha}:= \Bigl\{g\in \mathbb{N} ^\alpha:\ \sum _{j=1} ^\alpha g _j=p\Bigr\},\ \ \ c _{p,g}:=p!\prod _{\substack{j\in \left\{1,\ldots,M\right\}\\g _j\neq 0}}\frac{\pi _j ^{g _j}}{g _j!}\; .
    \end{split}
  \end{equation*}
  Then,
  \begin{equation*}
    \begin{split}
      \gamma_{\infty,\mathrm{frag}}^{(p)}=\sum _{g \in \mathcal{F} _{p,M}}c _{p,g}&\bigl\lvert\, (\underbrace{\varphi_1\otimes \dotsm\otimes \varphi_1}_{g _1})\vee \dotsm\vee (\underbrace{\varphi_{M}\otimes \dotsm\otimes \varphi_{M}}_{g _{M}})\,\bigr\rangle\\
      &\qquad\bigl\langle\,(\underbrace{\varphi_1\otimes \dotsm\otimes \varphi_1}_{g _1})\vee \dotsm\vee (\underbrace{\varphi_{M}\otimes \dotsm\otimes \varphi_{M}}_{g _{M}})\,\bigr\rvert\; .
    \end{split}
  \end{equation*}
\end{proposition}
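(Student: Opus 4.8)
The plan is to generalize to arbitrary $p$ the computation that produced the first effective marginal in \eqref{eq:2}--\eqref{eq:3}. I would start from the representation \eqref{eq:frag_infty_p_matrix} of $\gamma_{\infty,\mathrm{frag}}^{(p)}$ together with the explicit product form \eqref{eq:19} of $\mu_{\mathrm{frag}}$. Because $\mu_{\mathrm{frag}}=\bigotimes_{k\in P}\delta^{S^1}_{\sqrt{\pi_k}\varphi_k}\otimes\delta^{\perp}_0$, a vector drawn according to $\mu_{\mathrm{frag}}$ is almost surely of the form $\psi_\theta=\sum_{k\in P}e^{i\theta_k}\sqrt{\pi_k}\varphi_k$ as in \eqref{eq:3}, with the $\Pi$ phases $(\theta_k)_{k\in P}$ independent and uniformly distributed on $[0,2\pi)$. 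Hence \eqref{eq:frag_infty_p_matrix} reduces to the phase average
\[
  \gamma_{\infty,\mathrm{frag}}^{(p)}=\frac{1}{(2\pi)^{\Pi}}\int_{[0,2\pi]^{\Pi}}\bigl\lvert\,\psi_\theta^{\otimes p}\,\bigr\rangle\bigl\langle\,\psi_\theta^{\otimes p}\,\bigr\rvert\ \prod_{k\in P}\mathrm{d}\theta_k .
\]

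Next I would expand $\psi_\theta^{\otimes p}$ by the multinomial theorem in the symmetric tensor algebra. Expanding by multilinearity and collecting the maps $\{1,\dots,p\}\to\{1,\dots,M\}$ according to their occupation type $g\in\mathcal{F}_{p,M}$, and using that the $\varphi_k$ are orthonormal (so that the symmetric tensors $\Phi_g:=(\varphi_1^{\otimes g_1})\vee\dotsm\vee(\varphi_M^{\otimes g_M})$, $g\in\mathcal{F}_{p,M}$, form an orthonormal family) and that $\|\psi_\theta\|^2=\sum_{k\in P}\pi_k=1$ by \eqref{eq:f_pi}, one obtains
\[
  \psi_\theta^{\otimes p}=\sum_{g\in\mathcal{F}_{p,M}}\sqrt{c_{p,g}}\ e^{\,i\sum_{k\in P}g_k\theta_k}\ \Phi_g ,
\]
where $\sqrt{c_{p,g}}=\bigl(p!/\!\prod_k g_k!\bigr)^{1/2}\prod_k\pi_k^{g_k/2}$, and the terms with $g_k\neq 0$ for some $k\notin P$ simply vanish since there $\pi_k=0$. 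Forming the rank-one operator,
\[
  \bigl\lvert\,\psi_\theta^{\otimes p}\,\bigr\rangle\bigl\langle\,\psi_\theta^{\otimes p}\,\bigr\rvert=\sum_{g,g'\in\mathcal{F}_{p,M}}\sqrt{c_{p,g}\,c_{p,g'}}\ e^{\,i\sum_{k\in P}(g_k-g'_k)\theta_k}\ \lvert\Phi_g\rangle\langle\Phi_{g'}\rvert .
\]

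Finally I would integrate over the phases, using orthogonality of the characters $\theta\mapsto e^{i\sum_{k\in P}g_k\theta_k}$ on the torus $[0,2\pi]^{\Pi}$: since the surviving $g,g'$ are supported on $P$, the phase integral of each summand equals $1$ when $g=g'$ and $0$ otherwise, so all off-diagonal terms drop out and
\[
  \gamma_{\infty,\mathrm{frag}}^{(p)}=\sum_{g\in\mathcal{F}_{p,M}}c_{p,g}\,\lvert\Phi_g\rangle\langle\Phi_g\rvert ,
\]
which is the claimed identity. As a consistency check, $\mathrm{Tr}\,\gamma_{\infty,\mathrm{frag}}^{(p)}=\sum_{g\in\mathcal{F}_{p,M}}c_{p,g}=\bigl(\sum_{k=1}^{M}\pi_k\bigr)^p=1$ by the multinomial theorem. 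The argument is entirely finite-dimensional and involves only finite sums, so there is no genuine analytic obstacle; the one point requiring care is the normalization bookkeeping — i.e.\ verifying, for the convention adopted for $\vee$ in \eqref{eq:gamma_frag}, that each $\Phi_g$ is a unit vector and that its coefficient in the expansion of $\psi_\theta^{\otimes p}$ is exactly $\sqrt{c_{p,g}}\,e^{i\sum_{k\in P}g_k\theta_k}$ (with the alternative creation-operator normalization one instead picks up extra $\prod_k g_k!$ factors that must be tracked and reabsorbed).
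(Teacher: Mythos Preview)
Your argument is correct, but it proceeds along a different route from the paper's. You take as starting point the integral representation \eqref{eq:frag_infty_p_matrix} together with the explicit polycircle measure \eqref{eq:19} (which the paper imports from \cite{ammari2016cms}), expand $\psi_\theta^{\otimes p}$ by multinomial combinatorics in the symmetric tensor basis, and kill the off-diagonal terms by orthogonality of characters on the torus. The paper instead works at finite $N$: it writes out the integral kernel of $\gamma_{N,\mathrm{frag}}^{(p)}$ as a double sum over permutations in $\mathscr{S}_N$, reorganizes the sum by splitting each permutation into a choice of a $p$-subset and two smaller permutations, counts the admissible pairs to obtain the exact finite-$N$ formula
\[
  \gamma^{(p)}_{N,\mathrm{frag}}=\sum_{g\in\mathcal{F}_{p,M}^N}\binom{N}{p}^{-1}\prod_{j=1}^{M}\binom{f_j(N)}{g_j}\,\lvert\Phi_g\rangle\langle\Phi_g\rvert ,
\]
and then passes to the limit coefficient by coefficient using $f_j(N)=\pi_jN+o(N)$. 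Your approach is shorter and more transparent, but it relies on the identification of $\mu_{\mathrm{frag}}$ being already granted; the paper's computation is self-contained in that it produces the limit $p$-RDM directly from the finite-$N$ state without invoking the de Finetti/Wigner-measure machinery, and as a by-product gives the exact finite-$N$ marginals. Your caveat about the normalization convention for $\vee$ is the only place where care is needed, and you have identified it correctly.
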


We prove Proposition \ref{prop:matrices} in Section \ref{sec:proof-crefthm:matrices}. An immediate
consequance of Proposition \ref{prop:matrices} and \eqref{eq:infty_p_stat} we have the following results.

\begin{corollary}[\textbf{Coincidence of effective 1-RDM's}]\mbox{}\\
  \label{cor:2}
  Under the same assumptions as in Propositions \ref{prop:1} and \ref{prop:matrices}, assume also that the
  two corresponding families $\{\varphi_k,f_k(N)\}_{k=1}^{M}$ coincide. Then
  \begin{equation}
    \gamma_{\infty,\mathrm{frag}}^{(1)}=\gamma_{\infty, \mathrm{stat}}^{(1)}.
  \end{equation}
\end{corollary}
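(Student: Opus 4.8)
The plan is to evaluate the explicit formula of Proposition~\ref{prop:matrices} at $p=1$ and compare it with \eqref{eq:infty_p_stat}. First I would unpack $\mathcal{F}_{1,M}$: the constraint $\sum_{j=1}^{M}g_j=1$ with $g\in\mathbb{N}^{M}$ forces $g$ to be one of the $M$ coordinate vectors $e_k$, $k=1,\dots,M$. For $g=e_k$ one has $c_{1,e_k}=1!\,\pi_k^{1}/1!=\pi_k$, and the symmetric tensor product $(\underbrace{\varphi_1\otimes\dotsm}_{g_1})\vee\dotsm\vee(\underbrace{\varphi_M\otimes\dotsm}_{g_M})$ degenerates to the single vector $\varphi_k$. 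Proposition~\ref{prop:matrices} therefore gives
\begin{equation*}
  \gamma_{\infty,\mathrm{frag}}^{(1)}=\sum_{k=1}^{M}\pi_k\,\lvert\varphi_k\rangle\langle\varphi_k\rvert\;.
\end{equation*}
On the other hand, \eqref{eq:infty_p_stat} at $p=1$ reads $\gamma_{\infty,\mathrm{stat}}^{(1)}=\sum_{k=1}^{M}\pi_k\lvert\varphi_k\rangle\langle\varphi_k\rvert$. Once the two families $\{\varphi_k,f_k(N)\}_{k=1}^{M}$ are assumed identical, the orbitals $\varphi_k$ and the weights $\pi_k=\lim_{N\to\infty}f_k(N)/N$ coincide, so the two operators on $\mathcal{H}$ are equal, which is the assertion.

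Alternatively --- and this is more transparent as to \emph{why} the two $1$-RDM's agree --- one can argue directly from \eqref{eq:2}--\eqref{eq:3} without invoking Proposition~\ref{prop:matrices}. Expanding
\begin{equation*}
  \lvert\psi_\theta\rangle\langle\psi_\theta\rvert=\sum_{j,k\in P}e^{i(\theta_j-\theta_k)}\sqrt{\pi_j\pi_k}\,\lvert\varphi_j\rangle\langle\varphi_k\rvert
\end{equation*}
and performing the normalized average over the polycircle $[0,2\pi]^{\Pi}$, every off-diagonal term $j\neq k$ is annihilated, since integrating the independent phase $\theta_j$ (or $\theta_k$) over $[0,2\pi]$ yields zero; only the diagonal terms survive, reproducing $\sum_{k\in P}\pi_k\lvert\varphi_k\rangle\langle\varphi_k\rvert$. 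Since $\pi_k=0$ for $k\notin P$, this equals $\sum_{k=1}^{M}\pi_k\lvert\varphi_k\rangle\langle\varphi_k\rvert=\gamma_{\infty,\mathrm{stat}}^{(1)}$.

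There is no real obstacle here: the statement is an immediate corollary of Propositions~\ref{prop:1} and~\ref{prop:matrices}. The only point worth stressing is the conceptual one already anticipated in the text --- the $S^1$-average in $\mu_{\mathrm{frag}}$ kills exactly the coherences between distinct orbitals $\varphi_j,\varphi_k$, and it is precisely these coherences that would, in principle, distinguish a genuinely fragmented state from the statistical mixture. Since the $1$-RDM never sees cross terms $\lvert\varphi_j\rangle\langle\varphi_k\rvert$ with $j\neq k$ for either state, the collapse of $\gamma^{(1)}_{\infty,\mathrm{frag}}$ onto $\gamma^{(1)}_{\infty,\mathrm{stat}}$ is automatic; the distinction only emerges in $\gamma^{(p)}_\infty$ for $p\geq2$, which is the content of the subsequent discussion and motivates Definition~\ref{def:fragm_nostra}.
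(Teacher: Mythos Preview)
Your proof is correct and follows exactly the route the paper intends: the corollary is stated there as an immediate consequence of Proposition~\ref{prop:matrices} and \eqref{eq:infty_p_stat}, and your first paragraph carries out precisely that specialization to $p=1$. Your alternative argument via \eqref{eq:2}--\eqref{eq:3} is also already implicit in the paper's presentation and gives the same result by the same phase-averaging mechanism.
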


\begin{corollary}[\textbf{Rank of effective density matrices of fragmented states}]\mbox{}\\
  \label{cor:1}
  Let $R_{\mathrm{frag}}(p)$ be the rank of $\gamma_{\infty,\mathrm{frag}}^{(p)}$, and let $\Pi$ be
  defined by \eqref{eq:18}. Then
  \begin{equation*}
    R_{\mathrm{frag}}(p)=\binom{p+\Pi-1}{p}\; .
  \end{equation*}
\end{corollary}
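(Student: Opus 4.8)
The plan is to derive $R_{\mathrm{frag}}(p)$ directly from the explicit formula for $\gamma_{\infty,\mathrm{frag}}^{(p)}$ provided by Proposition \ref{prop:matrices}. The key observation is that the vectors appearing in that sum,
\begin{equation*}
  \Phi_g:=(\underbrace{\varphi_1\otimes \dotsm\otimes \varphi_1}_{g_1})\vee \dotsm\vee (\underbrace{\varphi_{M}\otimes \dotsm\otimes \varphi_{M}}_{g_{M}}),\qquad g\in\mathcal{F}_{p,M},
\end{equation*}
are mutually orthogonal in $\mathcal{H}_p$: since the $\varphi_k$ are orthonormal, two such symmetrized tensors with distinct multi-indices $g\neq g'$ differ in the occupation of at least one mode and hence pair to zero. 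Therefore $\gamma_{\infty,\mathrm{frag}}^{(p)}$ is already written in (a scalar multiple of) its spectral decomposition, and its rank equals the number of multi-indices $g\in\mathcal{F}_{p,M}$ for which the coefficient $c_{p,g}$ is nonzero.

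Next I would identify exactly which coefficients survive. By definition $c_{p,g}=p!\prod_{j:\,g_j\neq 0}\pi_j^{g_j}/g_j!$, so $c_{p,g}=0$ precisely when $g_j>0$ for some index $j$ with $\pi_j=0$, i.e.\ for some $j\notin P$. Hence the nonzero terms are indexed by those $g\in\mathbb{N}^M$ supported on $P$ with $\sum_{j\in P}g_j=p$; equivalently, by elements of $\mathcal{F}_{p,\Pi}$ after relabelling the $\Pi=\mathrm{Card}(P)$ relevant modes. It remains to count these: the number of ways to write $p$ as an ordered sum of $\Pi$ nonnegative integers is the classical stars-and-bars count $\binom{p+\Pi-1}{\Pi-1}=\binom{p+\Pi-1}{p}$. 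This yields $R_{\mathrm{frag}}(p)=\binom{p+\Pi-1}{p}$, as claimed.

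The only point requiring genuine care is the orthogonality claim for the $\Phi_g$, which is what guarantees that no cancellation or accidental rank drop occurs among the rank-one projectors in the sum; everything else is bookkeeping. This is, however, entirely routine: the symmetric tensor product of occupation-number states associated to an orthonormal family is, up to normalization, exactly a Fock-space number-basis vector, and distinct occupation vectors are orthogonal. I would state this as a one-line lemma (or simply invoke the standard identification with the bosonic Fock basis) and then combine it with Proposition \ref{prop:matrices} and the counting argument above to conclude.
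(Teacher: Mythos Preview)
Your proposal is correct and follows essentially the same route as the paper: the paper also deduces the rank by reading off the explicit diagonal form of $\gamma_{\infty,\mathrm{frag}}^{(p)}$ obtained in Proposition~\ref{prop:matrices}, observing that the nonzero terms are exactly those $g$ supported on $P$, and counting them as $\binom{p+\Pi-1}{p}$. The only difference is that you make the orthogonality of the $\Phi_g$ explicit, whereas the paper treats it as understood.
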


\subsection{Validity of Definition \ref{def:fragm_nostra}}

We exhibited two classes of very different states $\gamma_{N,\mathrm{stat}}$ and $\gamma_{N,\mathrm{frag}}$. They
both satisfy the usual Definition \ref{def:penr_ons} of finite fragmentation. Moreover, if the families
$\{\varphi_k,f_k(N)\}_{k=1}^M$ of Proposition \ref{prop:1} and \ref{prop:matrices} coincide, then by Corollary
\ref{cor:2} the effective 1-RDM of $\gamma_{N,\mathrm{stat}}$ and $\gamma_{N,\mathrm{frag}}$ coincide. The higher
$p$-RDM's however behave quite differently if $\Pi>1$ (if $\Pi=1$ both $\gamma_{N, \mathrm{stat}}$ and $\gamma_{N,
  \mathrm{frag}}$ are pure states describing simple condensates).

By Proposition \ref{prop:1}, the rank of $\gamma_{\infty,\mathrm{stat}}^{(p)}$ is a constant function of $p$. For
this reason $\gamma_{\infty,\mathrm{stat}}$ is \emph{not} fragmented according to our Definition
\ref{def:fragm_nostra}, in agreement with the fact that $\gamma_{N,\mathrm{stat}}$ will be observed with
probability one as a simple condensate. On the other hand, Proposition \ref{prop:matrices} shows that the
rank of $\gamma_{\infty,\mathrm{frag}}^{(p)}$ increases in $p$. This is precisely in accordance with our Definition
\ref{def:fragm_nostra}.

The characterization we propose has therefore the feature of excluding at least statistical mixtures of
simple condensates. This comes at the expense of Definition \ref{def:fragm_nostra} being only slightly
more difficult to verify than Definition \ref{def:penr_ons} in concrete cases. It is indeed worth
remarking that all physically relevant states with finite fragmented condensation that we know of satisfy
our Definition \ref{def:fragm_nostra}. To mention a concrete example, this includes the spin-one
fragmented state corresponding to the LPB wave function~\cite{law1998prl}. The LPB wavefunction is the
ground state for a system of $N$ weakly interacting spin-one bosons, with total spin zero. It can be
written mathematically as
\begin{equation*}
  {\bigl(a^{*}_{0,1}a^{*}_{0,-1}+a^{*}_{0,-1}a^{*}_{0,1}-a^{*}_{0,0}a^{*}_{0,0} \bigr)}^{\frac{N}{2}}\lvert \mathrm{vac}\rangle \; ,
\end{equation*}
where $a_{k,s}^{\sharp}$ are the creation and annihilation operators corresponding to momentum $k\in \mathbb{R}^3$ and
spin $s\in \{-1,0,1\}$ in the third spatial direction, and $\lvert \mathrm{vac}\rangle$ is the Fock vacuum
vector. The LPB state is not of the type $\gamma_{N, \mathrm{frag}}$ previously
considered. Nonetheless its effective 2-RDM is explicitly computable, and the resulting rank is larger
than the one of the 1-RDM.

To sum up, we believe that our Definition \ref{def:fragm_nostra} characterizes finite fragmented
condensation better than the usual Penrose-Onsager-like definition, at the same time being only slightly
more difficult to verify.

\section{Persistence of finite fragmented condensation}
\label{sec:pers-finite-fragm}

In this Section we study the behavior of (temperature-zero) finite fragmented condensation under a time
evolution ruled by a Hamiltonian with two-body interactions in the mean-field regime.

We will consider a system of $N$ bosons in $\mathbb{R}^d$ with (pseudo-) spin $s$, thus specializing the
notations of the previous Section to
\begin{equation}
  \mathcal{H}=L^2(\mathbb{R}^d,\mathbb{C}^{2s+1}),\qquad\cH_N:=\bigvee_{j=1}^N L^2(\mathbb{R}^d,\mathbb{C}^{2s+1}),\qquad s\in \frac{1}{2}\mathbb{N}\smallsetminus \{0\}.
\end{equation}
Let us consider the many-body Hamiltonian
\begin{equation}
  \label{eq:mf_Hamiltonian}
  H_{\omega,N}:=\sum_{j=1}^N \,h_{\omega,\,j}+\frac{1}{N}\sum_{j<k} V(x_j-x_k)\; ,
\end{equation}
acting on $\mathcal{H}_N$. The one-body Hamiltonian $h_\omega$ and the pair potential $V$ satisfy the
assumptions below. The notation $h_{\omega,j}$ indicates the operator which acts as $h_\omega$ on the
$j$-th copy $L^2(\mathbb{R}^d,\mathbb{C}^{2s+1})$ inside $\mathcal{H}_N$, and as the identity on all the
other factors. The choice of the coupling constant $N^{-1}$ in the pair interaction effectively puts us in
the so-called \emph{mean-field} regime, by making, at least formally, the two contributions to
$H_{\omega,N}$ of the same order.

\begin{assumption}[\textbf{The one-particle Hamiltonian $h_\omega$}]\mbox{}\\ \label{assum:h}
  The operator $h_\omega$ is positive and acts on $\mathcal{H}$, with the following properties:
  \begin{itemize}
    \setlength{\itemsep}{2mm}
    \item \textbf{(Domain):} $\mathcal{D}(h_\omega)$ does not depend on $\omega$;
  \item \textbf{(No action on spin):} $h_{\omega}=\underline{h}_{\omega}\otimes
    \mathrm{id}_{\mathbb{C}^{2s+1}}$, for a positive operator $\underline{h}_{\omega}$ on
    $L^2(\mathbb{R}^d,\mathbb{C})$;
  \item \textbf{(Zero ground state energy):} $\inf\sigma(\underline{h}_\omega)=0$, where
    $\sigma(\underline{h}_\omega)$ is the spectrum of $\underline{h}_\omega$;
  \item \textbf{(Ground state):} the set of $\underline{\varphi}\in L^2(\mathbb{R}^d,\mathbb{C})$, such that
    \begin{equation*}
      \underline{h}_\omega\underline{\varphi}=0
    \end{equation*}
    is a one-dimensional subspace, that does not depend on $\omega$.
  \item \textbf{(Gap condition):} $\inf
    \Bigl(\sigma(\underline{h}_{\omega})\smallsetminus\{0\}\Bigr)=\omega\in(0,+\infty)$.
  \end{itemize}
\end{assumption}

\begin{assumption}[\textbf{The pair potential $V$}]\mbox{}\\ \label{assum:V}
  The pair potential $V\in L^2_{\mathrm{loc}}(\mathbb{R}^d,\mathbb{C}^{2s+1})$ is an even function which is controlled by the one-body operator
  $h_{{\omega}}$ for every $\omega>0$ in the following sense: for every $\varepsilon>0$, there exists a
  constant $C_\varepsilon$ such that
  \begin{equation} \label{eq:bound_V1} |V(x-y)|\le \varepsilon h_{{\omega},x}\otimes \mathcal{I}_y +C_\varepsilon
  \end{equation}
  as quadratic forms on the product of form domains $\mathcal{D}[h_{{\omega}}]\otimes\mathcal{D}[h_{{\omega}}]$,
  \emph{and}
  \begin{equation}\label{eq:bound_V2}
    \|V\psi\|\le \varepsilon\| h_{{\omega}}\psi\|+C_\varepsilon\|\psi\|
  \end{equation}
  for every $\psi\in\mathcal{D}(h_{{\omega}})$.
\end{assumption}

Even though our assumptions are stated in a rather technical fashion, which makes them ready to be applied
in the proofs, they are fulfilled by a large class of Hamiltonians of physical interest. This is the case,
for example, of a one-particle Hamiltonian with kinetic energy operator and harmonic trap in
$d$-dimensions, \emph{i.e.},
\begin{equation} \label{eq:harmonic} \underline{h}_\omega=\omega(-\Delta+x^2-d)
\end{equation}
(we subtracted the term $d$ in order to have zero ground state energy, and $\omega$ plays the role of a
dilation factor). In this latter case, and for $d=3$, we can also consider the physically relevant case of
a pair interaction with a local Coulomb singularity, that is, $V(x)= c|x|^{-1}$. This is possible because
\eqref{eq:bound_V1}, \eqref{eq:bound_V2} are a consequence of Hardy's inequality
\begin{equation*}
  \frac{1}{4|x|^2}\le-\Delta_x \quad\text{on }L^2(\mathbb{R}^3).
\end{equation*}
As an immediate consequence, and again in the case of \eqref{eq:harmonic} in $d=3$, we can also consider
$V$ with local singularities of type $|x|^{-\alpha}$ with $\alpha\in[0,1)$. For generic $h_\omega$ and for every space
dimension $d$, we note that any $V\in L^\infty(\mathbb{R}^d)$ satisfies Assumption \ref{assum:V} since
\eqref{eq:bound_V1}, \eqref{eq:bound_V2} are trivially true even for $\varepsilon=0$.

Assumption \ref{assum:V} imply that, by Kato-Rellich Theorem, the many-body Hamiltonian $H_{\omega,N}$ defined
in \eqref{eq:mf_Hamiltonian} is self-adjoint on $D\Bigl(\sum_{j=1}^Nh_{\omega,j}\Bigr)$. Moreover, $h_{\omega}$ has a
$2s+1$-fold degenerate ground state, of energy equal to zero, spanned by the orthonormal functions
\begin{equation} \label{eq:gss}
  \varphi_1=\bigl(\underline{\varphi},0,\dotsc,0\bigr)\,,\dotsm,\,\varphi_{2s+1}=\bigl(0,\dotsc,0,\underline{\varphi}\bigr).
\end{equation}
The degeneracy is induced by the degrees of freedom due to the particles' (pseudo-) spin. Let us denote
\begin{equation}
  \label{eq:1}
  \mathcal{F}:= \operatorname{span}_{\mathbb{C}}\Bigl\{\varphi_1,\dotsc,\varphi_{2s+1}\Bigr\}\subset \mathcal{H}\; .
\end{equation}
For any $k=1,\dots,2s+1$, $\psi\in \mathcal{F}^{\perp}$, $\lVert \psi \rVert_{\mathcal{H}}^{}=1$ we have
the inequality
\begin{equation*}
  \langle \psi  , h_{\omega}\psi \rangle_{\mathcal{H}}-\langle \varphi_k  , h_{\omega}\varphi_{k} \rangle_{\mathcal{H}}\geq \omega\;.
\end{equation*}
This shows that the one-particle Hamiltonian $h_{\omega}$ also has an energy gap $\omega$.

Given an initial many-body configuration $\Psi_0\in\cH_N$, the time-evolution is ruled by the Schrödinger
equation
\begin{equation} \label{eq:N_body_schroedinger} \left\{\begin{aligned}
      &i\partial_t\Psi(t)=H_{\omega,N}\Psi(t)\\
      &\Psi(0)=\Psi_{0}
    \end{aligned}\right . \; ,
\end{equation}
whose solution is $\Psi(t)=e^{-itH_{\omega,N}}\Psi_0$. Since we aim at studying the time evolution of
fragmented condensates, we will consider as initial datum a pure ground state of $\sum_j h_{\omega,j}$ in
a fragmented BEC phase according to our Definition \ref{def:fragm_nostra}, \emph{i.e.},
\begin{equation} \label{eq:initial_datum}
  \gamma_{N,0}\equiv\lvert\Psi_0\rangle\langle\Psi_{0}\rvert\qquad\text{with
  }\qquad\Psi_0=\varphi_1^{\otimes f_1(N)}\vee\dotsm\vee \varphi_{2s+1}^{\otimes f_{2s+1}(N)}.
\end{equation}
Here, as in Section \ref{sec:char-fragm-cond}, we assume that the $f_k(N)$'s are (a sequence of) integers
such that
\begin{equation} \label{eq:f_pi'} \sum_{k=1}^{2s+1}f_k(N)=N,\;\;\forall
  N\qquad\text{and}\qquad\exists\,\pi_k:=\lim_{N\to\infty}\frac{f_k(N)}{N}\in[0,1].
\end{equation}
In other words, the family $\{\varphi_k,f_k(N)\}_{k=1}^{2s+1}$ with $\varphi_k$ defined in \eqref{eq:gss}
satisfies Assumption \ref{assum:one_body}.

The state $\gamma_{N,0}$ is indeed of the type \eqref{eq:gamma_frag} (thence it is fragmented according to
Definition \ref{def:fragm_nostra}) and, as we discussed in Section \ref{sec:char-fragm-cond}, the $N\to
\infty$ counterpart of $\gamma_{N,0}$ is a probability measure $\mu_0$ on the one-particle space
$\mathcal{H}$, in the sense that the $p$-th effective marginal associated with $\gamma_{N,0}$ is
\begin{equation}
  \label{eq:6}
  \gamma_{\infty,0}^{(p)}=\int_{\mathcal{H}}^{}\,\lvert \psi^{\otimes p}\rangle\langle \psi^{\otimes p}\rvert\;  \mathrm{d}\mu_0(\psi)\; .
\end{equation}
Explicitly, the measure $\mu_0$ is the $U(1)$-invariant product of convex combinations of delta measures
\begin{equation}
  \label{eq:7}
  \mu_0=\bigotimes_{k\in P}^{}\delta^{S_1}_{\sqrt{\pi_{k}}\varphi_{k}} \otimes \delta_0^{\perp}
\end{equation}
where $P$ is, as in \eqref{eq:P},
\begin{equation*}
  P=\Bigl\{k=1,\dotsc,2s+1\;|\; \pi_k\neq 0\Bigr\} \qquad\text{and}\qquad\Pi=\mathrm{Card}(P)\; .
\end{equation*}

The time-evolution induced by \eqref{eq:N_body_schroedinger} on the measure $\mu_0$ is given in the next
result, which we import directly from~\cite{ammari2015asns}.

\begin{theorem}[\textbf{Mean-field evolution of fragmented states}]\mbox{}\\ \label{thm:mf_fragm}
  Suppose Assumptions \ref{assum:h} and \ref{assum:V} hold. Let $\{\varphi_k,f_k(N)\}_{k=1}^N$ satisfy
  Assumption \ref{assum:one_body}, with $\varphi_k$ defined in \eqref{eq:gss}. Let $\Psi(t)$ be the
  solution to \eqref{eq:N_body_schroedinger} with initial datum
  \begin{equation*}
    \Psi_0=\varphi_1^{\otimes f_1(N)}\vee\dotsm\vee \varphi_{2s+1}^{\otimes f_{2s+1}(N)}
  \end{equation*}	
  and consider the evolved state $\gamma_{N,t}=|\Psi(t)\rangle\langle\Psi(t)|$.  Then for any fixed
  $p\in\mathbb{N}$ there exists the limit of the $p$-RDM
  \begin{equation*}
    \gamma^{(p)}_{\infty,t}=\lim_{N\to\infty}\gamma^{(p)}_{N,t}
  \end{equation*}
  in trace-norm, and
  \begin{equation}
    \label{eq:9}
    \gamma_{\infty,t}^{(p)}=\int_{\mathcal{H}}^{}\,\lvert \psi(t)^{\otimes p}\rangle\langle \psi(t)^{\otimes p}\rvert\;  \mathrm{d}\mu_0(\psi).
  \end{equation}
  Here $\mu_0$ is the initial measure \eqref{eq:7}, while $\psi(t)$ is the unique solution of the
  effective Hartree Cauchy problem
  \begin{equation}
    \label{eq:10}
    \begin{cases}
      \;i\partial_t\psi(t)=h_{\omega}\psi(t) + \bigl(V*\lvert \psi(t)  \rvert_{}^{2}\bigr)\psi(t)\\
      \;\psi(0)=\psi.
    \end{cases}
  \end{equation}
  where the initial datum $\psi$ is the integration variable in \eqref{eq:9}.
\end{theorem}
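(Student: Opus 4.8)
The plan is to obtain the statement as an application of the semiclassical theory of mean-field bosonic dynamics via infinite-dimensional Wigner (semiclassical) measures, as recalled in Appendix \ref{sec:mean-field-descr}, following Ammari and Nier \cite{ammari2015asns}. First I would lift the problem to the symmetric Fock space $\Gamma_{\mathrm{sym}}(\mathcal{H})$ with semiclassical parameter $\varepsilon=N^{-1}$: the initial pure state $|\Psi_0\rangle\langle\Psi_0|$, with $\Psi_0=\varphi_1^{\otimes f_1(N)}\vee\dotsm\vee\varphi_{2s+1}^{\otimes f_{2s+1}(N)}$, becomes a state $\rho_N$ supported on the $N$-particle sector, and $e^{-itH_{\omega,N}}$ is the restriction to that sector of the flow generated by the Wick quantization of $\varphi\mapsto\langle\varphi,h_\omega\varphi\rangle+\tfrac{1}{2}\langle\varphi,(V*|\varphi|^2)\varphi\rangle$. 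Since $\rho_N$ lives on a single particle-number sector, all moments of the (rescaled) number operator equal $1$, hence are bounded uniformly in $N$; this is exactly the tightness/non-escape-of-mass input needed both for the Wigner-measure machinery and, later, for upgrading weak-$*$ convergence of the reduced matrices to convergence in trace norm.

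The second step is to identify the initial Wigner measure of the family $(\rho_N)_N$. A Fock number vector $\varphi^{\otimes f(N)}$ with $f(N)/N\to\pi$ has, in the limit, the $\mathrm{U}(1)$-averaged Dirac mass $\delta^{S^1}_{\sqrt{\pi}\varphi}$ as its unique Wigner measure; tensorizing over the mutually orthogonal orbitals $\varphi_1,\dotsc,\varphi_{2s+1}$ of \eqref{eq:gss} gives precisely the measure $\mu_0$ of \eqref{eq:7}. Concretely this reduces to evaluating the normal-ordered correlation functions $\langle\Psi_0,a^*(\xi_1)\dotsm a^*(\xi_k)a(\eta_1)\dotsm a(\eta_\ell)\Psi_0\rangle$ and checking that they converge to the moments of $\mu_0$; this computation is carried out in \cite{ammari2016cms} (see the discussion around \eqref{eq:19}), and in particular it yields that the initial family has a \emph{unique} Wigner measure --- the indispensable hypothesis for the propagation result.

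The third step is to invoke the propagation theorem of \cite{ammari2015asns}: if the initial family has a uniform number bound and a unique Wigner measure $\mu_0$, and the Hartree flow $\Phi_t$ associated with \eqref{eq:10} is globally well-posed and continuous on the relevant phase space, then the evolved family $e^{-itH_{\omega,N}}\rho_N e^{itH_{\omega,N}}$ has the unique Wigner measure $(\Phi_t)_*\mu_0$. The first two conditions are already in place, so what remains is the classical Cauchy problem \eqref{eq:10}. Here Assumption \ref{assum:V} does the work: $V(x-y)$ is infinitesimally form-bounded and operator-bounded with respect to $h_{\omega,x}$, so $h_\omega$ is self-adjoint, $V*|\psi|^2$ is a mild perturbation on the energy space $\mathcal{D}[h_\omega^{1/2}]$, and conservation of the mass $\|\psi(t)\|$ together with conservation of the Hartree energy $\langle\psi(t),h_\omega\psi(t)\rangle+\tfrac{1}{2}\langle\psi(t),(V*|\psi(t)|^2)\psi(t)\rangle$ supplies the a priori bounds that globalize the local-in-time solution and make $\Phi_t$ continuous; the support of $\mu_0$ consists of finite combinations of ground-state orbitals, hence sits inside $\mathcal{D}(h_\omega)$. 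Finally, $\gamma^{(p)}_{N,t}$ is, up to the normalization built into the number-sector restriction, the rescaled $p$-th Wick moment of the evolved state; the uniform number bound turns convergence of Wigner moments into trace-norm convergence and identifies
\begin{equation*}
  \gamma^{(p)}_{\infty,t}=\int_{\mathcal{H}}|\varphi^{\otimes p}\rangle\langle\varphi^{\otimes p}|\,\mathrm{d}\bigl((\Phi_t)_*\mu_0\bigr)(\varphi)=\int_{\mathcal{H}}|\psi(t)^{\otimes p}\rangle\langle\psi(t)^{\otimes p}|\,\mathrm{d}\mu_0(\psi),
\end{equation*}
the last equality being the change of variables $\varphi=\Phi_t(\psi)=\psi(t)$. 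This is \eqref{eq:9}.

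I expect the only real obstacle to be the verification that our $h_\omega$ and $V$, under Assumptions \ref{assum:h} and \ref{assum:V}, genuinely fit the hypotheses of \cite{ammari2015asns}: namely, global well-posedness of \eqref{eq:10} on the energy space with a continuous flow (so that $(\Phi_t)_*\mu_0$ makes sense as a Borel probability measure) and, for possibly singular $V$ such as the Coulomb interaction allowed by Assumption \ref{assum:V}, the uniform-in-$N$ propagation of higher Sobolev-type regularity for the many-body dynamics that the Wigner-measure argument requires. The remaining steps --- the combinatorics of the initial correlation functions and the passage from Wigner measures to trace-norm convergence of the $p$-RDM's --- are routine once one exploits that the state is concentrated on a single particle-number sector.
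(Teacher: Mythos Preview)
Your proposal is correct and matches the paper's treatment: the paper does not supply an independent proof of Theorem~\ref{thm:mf_fragm} but explicitly imports it from \cite{ammari2015asns} (with the identification of the initial Wigner measure $\mu_0$ taken from \cite{ammari2016cms}), and your outline—lifting to Fock space with $\varepsilon=N^{-1}$, identifying $\mu_0$ via the normal-ordered moments, invoking the propagation theorem for Wigner measures under the Hartree flow, and using the number bound to upgrade to trace-norm convergence—is precisely the semiclassical scheme the paper summarizes in Appendix~\ref{sec:mean-field-descr}. The caveat you flag about checking that Assumptions~\ref{assum:h} and~\ref{assum:V} place $h_\omega$ and $V$ within the scope of \cite{ammari2015asns} is the only substantive point, and it is indeed where the work lies.
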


The above result shows how the effective probability distribution of single-particle states is pushed
forward by the Hartree effective evolution. Let us remark that the error made in approximating the evolved
interacting $N$-particle reduced density matrices with $\gamma_{\infty,t}^{(p)}$ given by \eqref{eq:9} is
of order $N^{-1}$ for any fixed time (with a deterioration as $t$ increases). This is confirmed by
theoretical and numerical analysis \cite{ammari2016cms}.

From this mathematical description it is quite easy to understand why, generically, inter-particle
interaction \emph{breaks down finite fragmentation}. Let us consider for simplicity the case of
\begin{equation*}
  \Psi_0=\varphi_1^{\otimes N/2}\vee \varphi_2^{\otimes N/2}.
\end{equation*}
As already discussed, it is convenient to rewrite \eqref{eq:6} for $p=1$ as
\begin{equation*}
  \begin{split}
    \gamma_{\infty,0}^{(1)}=\;&\frac{1}{(2\pi)^2}\int_0^{2\pi}\int_0^{2\pi}\Big| \frac{e^{i\theta_1}}{\sqrt{2}}\varphi_1+\frac{e^{i\theta_2}}{\sqrt{2}}\varphi_2\Big\rangle \Big\langle \frac{e^{i\theta_1}}{\sqrt{2}}\varphi_1+\frac{e^{i\theta_2}}{\sqrt{2}}\varphi_2\Big| d\theta_1 d\theta_2\\
    =\;&\frac{1}{2}|{\varphi_1}\rangle\langle\varphi_1|+\frac{1}{2}|{\varphi_2}\rangle\langle\varphi_2|.
  \end{split}
\end{equation*}
Notice in particular that the crossed terms vanish exactly, and of course there is no contribution coming
from the sector of the Hilbert space orthogonal to $\varphi_1$ and $\varphi_2$. At later times, however,
by Theorem \ref{thm:mf_fragm} and \eqref{eq:7} we have
\begin{equation} \label{eq:gamma1t}
  \begin{split}
    \gamma_{\infty,t}^{(1)}=\;\frac{1}{(2\pi)^2}\int_0^{2\pi}\int_0^{2\pi}| \psi_{\theta_1,\theta_2}(t)\rangle \langle \psi_{\theta_1,\theta_2}(t)| d\theta_1 d\theta_2
  \end{split}
\end{equation}
with
\begin{equation*}
  \begin{cases}
    \;i\partial_t\psi_{\theta_1,\theta_2}(t)=h_{\omega}\psi_{\theta_1,\theta_2}(t) + \bigl(V*\lvert \psi_{\theta_1,\theta_2}(t)  \rvert_{}^{2}\bigr)\psi_{\theta_1,\theta_2}(t)\\
    \;\psi_{\theta_1,\theta_2}(0)=\frac{e^{i\theta_1}}{\sqrt{2}}\varphi_1+\frac{e^{i\theta_2}}{\sqrt{2}}\varphi_2.
  \end{cases}
\end{equation*}
Due to the nonlinearity, for $t>0$ the $\theta_1,\theta_2$-average of the projections onto the above Hartree
solutions generically has infinite rank. This breaks down fragmentation according
to both definitions presented in Section \ref{sec:char-fragm-cond}. The situation is different in the sole
case $\Pi=1$, \emph{i.e.}\ when the initial state is actually a simple condensate. In this case the measure
$\mu_0$ is a $U(1)$-invariant convex combination of delta measures, and such structure is preserved by the
$U(1)$-invariant nonlinear Hartree evolution; simple condensation is therefore preserved, as already known
from the rich mathematical literature on the evolution of BEC's in the mean-field
regime~\cite{ginibre1979cmpI,ginibre1979cmp2,rodnianski2009cmp,ErdSchYau-10,knowles2010cmp,Benedikter-Porta-Schlein-2015}.

We will show in the following that we can recover persistence of finite fragmented BEC if, within our
Assumption \ref{assum:h}, the gap $\omega$ between the degenerate ground state and the first excited states of
the one-particle Hamiltonian becomes very large. The intuitive explanation is the following: the energetic
cost to transition from the (free) ground state to excited states is so high that inter-particle
interactions are not strong enough to cause such a jump. In this way particles are effectively constrained
to the $2s+1$-dimensional Hilbert space of degenerate ground states of $h_{\omega}$, and this preserves the
finite fragmentation caused by spin degeneracy. Nonetheless, there is still an effective nonlinear
evolution occurring in the aforementioned $2s+1$-dimensional Hilbert space.

We will rigorously justify the above intuition, and provide an explicit effective one-particle evolution
valid in the span of the degenerate ground states in the limit $\omega\to \infty$. For the sake of simplicity we will
focus on the evolution of the first marginal $\gamma_{\infty,t}^{(1)}(\omega)$ (where the dependence on $\omega$ is made
explicit to clarify that we are studying the infinite gap limit). At the end we will briefly discuss the
behavior of higher marginals.

\begin{theorem}[\textbf{Persistence of fragmented BEC in the large gap limit}]\mbox{}\\
  \label{thm:convergence}
  Let $\{\kappa_{j,t,\{\theta_{m},m\in P\}}(\infty)\}_{j=1}^{2s+1}$ solve the system of ODEs below, for $t\ge0$,
  \begin{equation} \label{eq:limit_eq} \left\{
      \begin{aligned}
        &\;i \partial_t \kappa_{j,t,\{\theta_{m},m\in P\}}(\infty)=\Bigl\langle \;\varphi_{j}\;  ,\; V*\Bigl\lvert \sum_{\ell=1}^{2s+1}\kappa_{\ell,t,\{\theta_{m},m\in P\}}(\infty)\varphi_{\ell}  \Bigr\rvert_{}^2\Bigl(\sum_{\ell'=1}^{2s+1}\kappa_{\ell',t,\{\theta_{m},m\in P\}}(\infty)\varphi_{\ell'} \Bigr)\; \Bigr\rangle_{} \\
        &\;\kappa_{j,0,\{\theta_{m},m\in P\}}(\infty)=\left\{
          \begin{aligned}
            &e^{i\theta_{j}}\sqrt{\pi_{j}}& \text{if } j\in P\\
            &0& \text{if } j\notin P,
          \end{aligned}
        \right .
      \end{aligned}
    \right.
  \end{equation}
  and define the infinite gap phase averages
  \begin{equation}
    \label{eq:large_gap_coefficient}
    \begin{split}
      K_{j\ell,t}(\infty)&=\frac{1}{(2\pi)^\Pi}\int_0^{2\pi}  \overline{\kappa_{\ell,t,\{\theta_{m},m\in P\}}(\infty)}\kappa_{j,t,\{\theta_{m},m\in P\}}(\infty)  \, \prod_{m\in P}^{} \mathrm{d}\theta_m.
    \end{split}
  \end{equation}
  Define also the effective 1-RDM in the large gap limit
  \begin{equation*}
    \gamma^{(1)}_{\infty,t}(\infty):=\sum_{j=1}^{2s+1}K_{jj,t}(\infty)\lvert\varphi_{j}\rangle\langle\varphi_{j}\rvert+\sum_{j<\ell=1}^{2s+1}\Big(K_{j\ell,t}(\infty)\lvert\varphi_{j}\rangle\langle\varphi_{\ell}\rvert+\overline{K_{j\ell,t}(\infty)}\lvert\varphi_{\ell}\rangle\langle\varphi_{j}\rvert\Big).
  \end{equation*}
  Suppose that Assumptions \ref{assum:h} and \ref{assum:V} hold, and let $\gamma^{(1)}_{N,t}(\omega)$ be
  the 1-RDM associated to $\Psi(t)$ solution to \eqref{eq:N_body_schroedinger} for $t\ge0$. Let the
  initial datum $\Psi_0$ be given by \eqref{eq:initial_datum} with $\{\varphi_k,f_k(N)\}_{k=1}^{2s+1}$
  satisfying Assumption \ref{assum:one_body} and $\varphi_k$ defined in \eqref{eq:gss}. Then for any fixed
  $t\ge0$ we have\begin{equation*}
    \gamma^{(1)}_{\infty,t}(\infty)=\lim_{\omega\to\infty}\lim_{N\to\infty}\gamma_{N,t}^{(1)}(\omega)
    =\lim_{N\to\infty}\lim_{\omega\to\infty}\gamma_{N,t}^{(1)}(\omega)
  \end{equation*}
  in trace-norm.  As a consequence
  \begin{equation*}
    \operatorname{Rank} \gamma^{(1)}_{\infty,t}(\infty)=  2s+1 \text{ for almost all $t\in \mathbb{R}$}\,.
  \end{equation*}
\end{theorem}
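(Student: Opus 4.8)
\emph{Strategy.} The claim has two parts: the two iterated limits of $\gamma^{(1)}_{N,t}(\omega)$ exist and both equal $\gamma^{(1)}_{\infty,t}(\infty)$, and $\gamma^{(1)}_{\infty,t}(\infty)$ has full rank for a.e.\ $t$. The plan is to evaluate each iterated limit separately (their coincidence is then automatic) and to analyse the rank afterwards by analyticity. \emph{First, the order $N\to\infty$ then $\omega\to\infty$.} For every fixed $\omega$ Assumptions~\ref{assum:h} and~\ref{assum:V} hold, so Theorem~\ref{thm:mf_fragm} applies verbatim and gives $\gamma^{(1)}_{\infty,t}(\omega):=\lim_{N\to\infty}\gamma^{(1)}_{N,t}(\omega)=\int_{\mathcal{H}}|\psi_\omega(t)\rangle\langle\psi_\omega(t)|\,\mathrm{d}\mu_0(\psi)$, where $\psi_\omega(t)$ solves the Hartree problem~\eqref{eq:10} with datum $\psi$ ranging over $\operatorname{supp}\mu_0$, hence $\psi\in\mathcal{F}$. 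To pass $\omega\to\infty$ I would use conservation along~\eqref{eq:10} of the Hartree energy $\mathcal{E}_\omega(u):=\langle u,h_\omega u\rangle+\tfrac12\langle u\otimes u,V(x_1-x_2)\,u\otimes u\rangle$, together with $\|\psi_\omega(t)\|=1$ and the form bound~\eqref{eq:bound_V1}: since $\langle\psi,h_\omega\psi\rangle=0$ for $\psi\in\mathcal{F}$ the initial energy is independent of $\omega$, and~\eqref{eq:bound_V1} absorbs the potential term, whence $\sup_t\langle\psi_\omega(t),h_\omega\psi_\omega(t)\rangle\le C$ uniformly in $\omega$. As $h_\omega\ge\omega$ on $\mathcal{F}^\perp$, this forces $\|(\mathrm{id}-\mathsf{P}_{\mathcal{F}})\psi_\omega(t)\|\le C\omega^{-1/2}$, i.e.\ adiabatic confinement of the Hartree flow to $\mathcal{F}$ ($\mathsf{P}_{\mathcal{F}}$ the projection onto $\mathcal{F}$). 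Writing $\psi_\omega(t)=\sum_j\kappa_{j,t}(\omega)\varphi_j+O(\omega^{-1/2})$, projecting~\eqref{eq:10} onto $\mathcal{F}$ (note $\mathsf{P}_{\mathcal{F}}h_\omega=\mathsf{P}_{\mathcal{F}}h_\omega\mathsf{P}_{\mathcal{F}}=0$) and estimating the $O(\omega^{-1/2})$ corrections inside the cubic nonlinearity via~\eqref{eq:bound_V2}, a Grönwall argument for the finite ODE system shows $\kappa_{j,t}(\omega)\to\kappa_{j,t}(\infty)$ solving~\eqref{eq:limit_eq}. Dominated convergence over the compact polycircle $\operatorname{supp}\mu_0$ then gives $\gamma^{(1)}_{\infty,t}(\omega)\to\int_{\mathcal{H}}|\psi_\infty(t)\rangle\langle\psi_\infty(t)|\,\mathrm{d}\mu_0=\sum_{j,\ell}K_{j\ell,t}(\infty)\,|\varphi_j\rangle\langle\varphi_\ell|=\gamma^{(1)}_{\infty,t}(\infty)$, where $\psi_\infty(t):=\sum_j\kappa_{j,t}(\infty)\varphi_j$, the $\theta$-dependence is the integration variable in $\mathrm{d}\mu_0$, and I used~\eqref{eq:large_gap_coefficient} and the explicit form~\eqref{eq:7} of $\mu_0$.

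\emph{Second, the order $\omega\to\infty$ then $N\to\infty$.} Fix $N$. The same energy/confinement argument applied to~\eqref{eq:N_body_schroedinger} — now using that $\langle\Psi(t),\sum_j h_{\omega,j}\Psi(t)\rangle$ is controlled, uniformly in $\omega$ via~\eqref{eq:bound_V1}, by the conserved total energy, and that $\sum_j h_{\omega,j}$ has gap $\omega$ above its ground eigenspace $\mathcal{F}^{\otimes_{\mathrm{sym}}N}$ — gives $\|(\mathrm{id}-\mathsf{P}_N)\Psi(t)\|\le C_N\omega^{-1/2}$, with $\mathsf{P}_N$ the projection of $\mathcal{H}_N$ onto $\mathcal{F}^{\otimes_{\mathrm{sym}}N}$. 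Passing to the interaction picture (on which $\sum_j h_{\omega,j}$ acts trivially on $\mathcal{F}^{\otimes_{\mathrm{sym}}N}$) and discarding the $O_N(\omega^{-1/2})$ off-diagonal coupling, $\mathsf{P}_N\Psi(t)$ converges as $\omega\to\infty$ to $\Phi_N(t):=e^{-it\,\mathsf{P}_N(\frac1N\sum_{j<k}V(x_j-x_k))\mathsf{P}_N}\Psi_0$, so $\lim_{\omega\to\infty}\gamma^{(1)}_{N,t}(\omega)$ is the $1$-RDM of $\Phi_N(t)$. Now $\mathsf{P}_N(\frac1N\sum_{j<k}V(x_j-x_k))\mathsf{P}_N$ is a bosonic mean-field Hamiltonian on the \emph{finite-dimensional} one-particle space $\mathcal{F}\cong\mathbb{C}^{2s+1}$, with two-body operator $\widetilde V:=(\mathsf{P}_{\mathcal{F}}\otimes\mathsf{P}_{\mathcal{F}})V(x_1-x_2)(\mathsf{P}_{\mathcal{F}}\otimes\mathsf{P}_{\mathcal{F}})$, while $\Psi_0$ is exactly the fragmented datum~\eqref{eq:initial_datum} whose $N\to\infty$ Wigner measure is $\mu_0$ from~\eqref{eq:7}. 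Applying the semiclassical mean-field theorem underlying Theorem~\ref{thm:mf_fragm} — here in a much simpler finite-dimensional setting — the $N\to\infty$ limit of the $1$-RDM of $\Phi_N(t)$ equals $\int_{\mathcal{H}}|\kappa(t)\rangle\langle\kappa(t)|\,\mathrm{d}\mu_0$ with $\kappa(t)$ solving the Hamiltonian flow of $\kappa\mapsto\tfrac12\langle\kappa^{\otimes2},\widetilde V\kappa^{\otimes2}\rangle$; unwinding $\widetilde V$ (and $h_\omega\varphi_j=0$) identifies this flow with~\eqref{eq:limit_eq}, so the limit is again $\sum_{j,\ell}K_{j\ell,t}(\infty)\,|\varphi_j\rangle\langle\varphi_\ell|=\gamma^{(1)}_{\infty,t}(\infty)$. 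Hence both iterated limits equal $\gamma^{(1)}_{\infty,t}(\infty)$.

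\emph{Rank.} Since $\{\varphi_j\}_{j=1}^{2s+1}$ is orthonormal, $\operatorname{Rank}\gamma^{(1)}_{\infty,t}(\infty)$ equals the rank of the positive Hermitian matrix $K_t:=\bigl(K_{j\ell,t}(\infty)\bigr)_{j,\ell=1}^{2s+1}$, which by~\eqref{eq:large_gap_coefficient} is the $\theta$-average of the rank-one matrices $\kappa_t(\theta)\kappa_t(\theta)^{*}$; thus $K_t$ is singular precisely when $\{\kappa_t(\theta)\}_\theta$ spans a proper subspace of $\mathbb{C}^{2s+1}$. The right-hand side of~\eqref{eq:limit_eq} is polynomial in $(\kappa,\overline{\kappa})$, so $t\mapsto\kappa_t(\theta)$, and therefore $t\mapsto\det K_t$, is real-analytic; consequently $\det K_t$ either vanishes identically or has a discrete, hence Lebesgue-null, zero set. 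To exclude the first alternative one evaluates at $t=0$: the phase averaging in~\eqref{eq:large_gap_coefficient} gives $K_0=\operatorname{diag}(\pi_1,\dots,\pi_{2s+1})$, so $\det K_0=\prod_k\pi_k$, which is nonzero in the genuinely fragmented regime in which every spin component is macroscopically occupied (and when some $\pi_k$ vanish one checks non-degeneracy instead from the leading-order-in-$t$ expansion of~\eqref{eq:limit_eq}, which populates the remaining components). Either way $\det K_t\ne0$ for $t$ outside a discrete set, i.e.\ $\operatorname{Rank}\gamma^{(1)}_{\infty,t}(\infty)=2s+1$ for almost every $t$.

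\emph{Main obstacle.} The technical core is the pair of uniform-in-$\omega$ a priori bounds and the error control they feed: since $V$ is only form-/operator-bounded by $h_\omega$ and may be locally singular, Assumption~\ref{assum:V} must be used carefully both to derive the $\omega$-uniform energy bound giving adiabatic confinement and to show that the $O(\omega^{-1/2})$ deviation from $\mathcal{F}$ (respectively $\mathcal{F}^{\otimes_{\mathrm{sym}}N}$) contributes only vanishing corrections to the effective cubic (respectively mean-field) dynamics. The route $N\to\infty$ after $\omega\to\infty$ carries the extra burden of first making rigorous sense of, and then mean-field-limiting, the $\omega=\infty$ confined $N$-body evolution; once both iterated limits have been identified with $\gamma^{(1)}_{\infty,t}(\infty)$, their equality needs no further argument.
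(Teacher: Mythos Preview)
Your convergence arguments for both iterated limits track the paper's proof closely. For $N\to\infty$ then $\omega\to\infty$: the Hartree-energy bound plus the spectral gap yields the $O(\omega^{-1/2})$ confinement to $\mathcal{F}$ (the paper's Proposition~\ref{prop:perp_small}), projecting~\eqref{eq:10} onto $\mathcal{F}$ and running Gr\"onwall gives convergence of the coefficients (Proposition~\ref{prop:coefficient_convergence}), and one passes the phase integral to the limit. For $\omega\to\infty$ then $N\to\infty$: the many-body energy bound gives confinement of $\Psi(t)$ to $\mathcal{F}^{\otimes_{\mathrm{sym}}N}$ (Proposition~\ref{prop:many_body_perp}), a Duhamel/interaction-picture comparison identifies the limiting $N$-body dynamics as $e^{-it\,P_N V_N P_N}\Psi_0$ (Proposition~\ref{prop:many_body_omega}), and one then invokes finite-dimensional mean-field theory on $\mathbb{C}^{2s+1}$ with symbol~\eqref{eq:15}. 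The only stylistic difference is that the paper writes out the Duhamel integral and uses the inequality $\mathrm{Tr}\lvert p_{\psi_1}-p_{\psi_2}\rvert\le 2\lVert\psi_1-\psi_2\rVert$ where you invoke the interaction picture and dominated convergence; these are interchangeable.

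There is, however, a genuine gap in your rank argument when $\Pi<2s+1$. The real-analyticity reduction is sound, and for $\Pi=2s+1$ the evaluation $\det K_0=\prod_k\pi_k\neq0$ closes it. But your parenthetical claim that in the degenerate case the leading-order-in-$t$ expansion of~\eqref{eq:limit_eq} ``populates the remaining components'' is unjustified and in general false. Since $\varphi_j$ is supported in a single spin sector and, for $u=\sum_\ell\kappa_\ell\varphi_\ell$, one has $\lvert u\rvert^2=\bigl(\sum_\ell\lvert\kappa_\ell\rvert^2\bigr)\lvert\underline{\varphi}\rvert^2$, whenever $V$ acts diagonally in spin (in particular for the scalar Coulomb example discussed after Assumption~\ref{assum:V}) the right-hand side of~\eqref{eq:limit_eq} equals $\kappa_j\cdot c\sum_\ell\lvert\kappa_\ell\rvert^2$ with $c=\int (V*\lvert\underline{\varphi}\rvert^2)\lvert\underline{\varphi}\rvert^2$ real. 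Hence $\lvert\kappa_j(t)\rvert\equiv\lvert\kappa_j(0)\rvert$, so $\kappa_j(0)=0$ forces $\kappa_j(t)\equiv0$ and $\operatorname{Rank}K_t\le\Pi$ for every $t$. The full-rank statement therefore needs an explicit spin-coupling hypothesis on $V$ that neither you nor the paper isolate; absent that, the analyticity argument only yields $\operatorname{Rank}\gamma^{(1)}_{\infty,t}(\infty)=\Pi$ for a.e.\ $t$. (The paper does not supply a proof of the rank assertion either, so this lacuna is shared.)
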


We will prove Theorem \ref{thm:convergence} in Section \ref{sec:proof-crefthm:convergence1} and Section
\ref{sec:proof-crefthm:convergence2}. Let us remark that it is possible to extend Theorem
\ref{thm:convergence} without difficulty to any $p$-RDM.  In fact, one can deduce that the $p$-RDM has the
following form
\begin{equation*}
\gamma_{\infty,t}^{(p)}(\infty)=\frac{1}{(2\pi)^{\Pi}}\int_0^{2\pi} \biggl(\Bigl\lvert\sum_{j=1}^{2s+1}\kappa_{j,t,\{\theta_m,m\in P\}}(\infty)\varphi_j \Bigr\rangle\Bigl\langle \sum_{j=1}^{2s+1}\kappa_{j,t,\{\theta_m,m\in P\}}(\infty)\varphi_j\Bigr\rvert\biggr)^{\otimes p} \prod_{m\in P} \mathrm{d}\theta_m  \; ,
\end{equation*}
and therefore
\begin{equation*}
  \operatorname{Rank} \gamma^{(p)}_{\infty,t}(\infty)=  \binom{p+2s}{p}\text{ for almost all $t\in \mathbb{R}$}\; .
\end{equation*}
This shows that fragmented BEC in the sense of our Definition \ref{def:fragm_nostra} is preserved by the
mean-field evolution in the $\omega\to\infty$ limit. The system remains finitely fragmented on the space of
one-particle ground states: at almost every time there is a non-zero macroscopic fraction of particles
occupying \emph{all} the available degenerate one-particle ground states, provided that at $t=0$ at least
two of them had macroscopic occupation. No macroscopic occupation of the higher energy space of the
one-particle Hamiltonian occurs.

We remark that, even if the two limits $\omega\to\infty$ and $N\to\infty$ yield the same result no matter the order in which they are taken, our strategy does not allow to take the joint limit $N,\omega\to\infty$. This is due to a lack of uniformity in the bounds we use in the proof.

\subsection{Discussion of Theorem~\ref{thm:convergence}, and outline of its proof}

Let us briefly discuss why, at least formally, \eqref{eq:limit_eq} and \eqref{eq:large_gap_coefficient}
should accurately describe the large-$\omega$ behavior of our system. Using \eqref{eq:7} and \eqref{eq:9}
we rewrite \eqref{eq:10} as
\begin{equation}
  \label{eq:11}
  \gamma_{\infty,t}^{(1)}(\omega)=\frac{1}{(2\pi)^\Pi}\int_0^{2\pi} \bigl\lvert \psi^{(\omega)}_{\{\theta_{m},m\in P\}}(t)\bigr\rangle\bigl\langle \psi^{(\omega)}_{\{\theta_{m},m\in P\}}(t)\bigr\rvert \, \prod_{m\in P}^{} \mathrm{d}\theta_m\;,
\end{equation}
where $\psi^{(\omega)}_{\{\theta_{m},m\in P\}}(t)$ is the solution of \eqref{eq:10} with initial condition
\begin{equation}
  \label{eq:12}
  \psi^{(\omega)}_{\{\theta_{m},m\in P\}}(0)=\sum_{k\in P}^{}e^{i\theta_k}\sqrt{\pi_k}\varphi_k\; .
\end{equation}
Let us decompose $\psi^{(\omega)}_{\{\theta_{m},m\in P\}}(t)$ according to the orthogonal Hilbert space
decomposition
\begin{equation*}
  \mathcal{H}=\mathcal{F} \oplus \mathcal{F}^{\perp}\; ,
\end{equation*}
with $\mathcal{F}$ introduced in \eqref{eq:1}. Let us stress that $\mathcal{F}$ is spanned by \emph{all}
the $2s+1$ degenerate ground states, even those for which $\pi_k=0$. We obtain
\begin{equation}
  \label{eq:13}
  \psi^{(\omega)}_{\{\theta_{m},m\in P\}}(t)=\sum_{k=1}^{2s+1}\kappa_{k,t\{\theta_{m},m\in P\}}(\omega)\varphi_{k}+\psi^{\perp}_{t}(\omega)\; ;
\end{equation}
where $\kappa_{k,t\{\theta_{m},m\in P\}}(\omega)\in \mathbb{C}$ for any $k=1,\dotsc,2s+1$, and
$\psi^{\perp}_{t}(\omega)\in \mathcal{F}^{\perp}$.  A direct computation shows that the coefficients
$\kappa_{k,t\{\theta_{m},m\in P\}}(\omega)$ satisfy the system of ODE's (keeping in mind that $\psi^{(\omega)}_{\{\theta_{m},m\in P\}}(t)$ is a solution of the Hartree equation \eqref{eq:10})
\begin{equation} \label{eq:ODE_omega} \left\{
    \begin{aligned}
      &i \partial_t \kappa_{j,t,\{\theta_{m},m\in P\}}(\omega)=\biggl\langle \;\varphi_{j}\;  ,\; V*\bigl\lvert\psi^{(\omega)}_{\{\theta_{m},m\in P\}}(t) \bigr\rvert_{}^2\bigl(\psi^{(\omega)}_{\{\theta_{m},m\in P\}}(t)\bigr)\; \biggr\rangle_{} \\
      &\kappa_{j,0,\{\theta_{m},m\in P\}}(\omega)=\left\{
	\begin{aligned}
          &e^{i\theta_{j}}\sqrt{\pi_{j}}& \text{if } j\in P\\
          &0& \text{if } j\notin P
	\end{aligned}
      \right .
    \end{aligned}
  \right .\quad .
\end{equation}
	
In the large-$\omega$ limit, it should not be possible for the Hartree flow to transfer any fraction of
the norm of $\psi^{(\omega)}_{\{\theta_{m},m\in P\}}(t)$ from $\mathcal{F}$ to
$\mathcal{F}^\perp$. Intuitively this should mean
\begin{equation*}
  \psi_t^\perp(\omega)\simeq 0,\quad\text{as }\omega\to\infty.
\end{equation*}
This will be rigorously proven in Proposition \ref{prop:perp_small}. If we assume
$\psi_t^\perp(\omega)=0$, then \eqref{eq:ODE_omega} formally reduces to \eqref{eq:limit_eq} with
$\omega=\infty$. The interpretation is that the $\mathcal{F}$-components of the Hartree-evolved
wave-functions eventually decouple from the evolution occurring in $\mathcal{F}^\perp$. The limit ODE's
\eqref{eq:limit_eq} show that the evolution of the $\kappa's$ is influenced only by the initial datum
(which also belongs to $\mathcal{F}$).

Let us define for any $j<l=1,\dotsc,2s$ the `phase-averaged' transition amplitudes
\begin{equation}
  \label{eq:14}
  \begin{split}
    K_{j\ell,t}(\omega)&=\frac{1}{(2\pi)^\Pi}\int_0^{2\pi}  \overline{\kappa_{\ell,t,\{\theta_{m},m\in P\}}(\omega)}\kappa_{j,t,\{\theta_{m},m\in P\}}(\omega)  \, \prod_{m\in P}^{} \mathrm{d}\theta_m
  \end{split}.
\end{equation}
For $t=0$, the $\{K_{j\ell,0}(\omega)\}_{j\ell}$'s are precisely the coefficients of the 1-RDM of the
system in the basis $\{\varphi_k\}$. In particular, only the $K_{jj,0}(\omega)$ differ from zero, due to
the phase integration. At later times, and for finite $\omega$, the $\{K_{j\ell,t}(\omega)\}_{j\ell}$'s
still define a positive matrix in the space spanned by the $\varphi_k$'s. This matrix however will in
general not be diagonal, and more importantly, its trace will in general be smaller than one. This is due
to the Hartree flow transfering mass to $\psi^\perp_t(\omega)\in\mathcal{F}^\perp$. In the large $\omega$
limit however, by neglecting every transition to excited states, we recover that the
$\{K_{j\ell,t}(\infty)\}_{j\ell}$'s define a matrix with unit trace which is the effective 1-RDM of the
system.

\section{Proof of Theorem \ref{thm:convergence}, case $N\to\infty$ followed by $\omega\to\infty$}
\label{sec:proof-crefthm:convergence1}

We prove \eqref{thm:convergence} in two steps. In this Section we prove the case of $N\to \infty$ first
and then $\omega\to \infty$. In Section \ref{sec:proof-crefthm:convergence1} we prove the case of limits
in the reverse order, and show that we obtain the same result.

The limit $N\to\infty$ was proven in~\cite{ammari2015asns,ammari2016cms}, and we already presented the
result in Theorem \ref{thm:mf_fragm}. Hence, we consider directly the effective problem and prove that, at
every fixed time $t\ge0$, the component of $\psi^{(\omega)}_{\{\theta_{m},m\in P\}}(t)$ that is orthogonal
to all the $\varphi_k$'s vanishes in $L^2$ for $\omega\to\infty$. Then, we show that the coefficients
$\{\kappa_{j,t,\{\theta_{m},m\in P\}}(\omega)\}$ converge to the solution $\{
\kappa_{j,t,\{\theta_{m},m\in P\}}(\infty)\}$ of \eqref{eq:limit_eq} (the existence and uniqueness of
solutions to this system of ODE does not present difficulties). The above will imply that
$\psi^{(\omega)}_{\{\theta_{m},m\in P\}}(t)$ has a limit for every $\theta_{j}\in[0,2\pi]$ and for every
$t\ge0$, as $\omega\to\infty$. Then we show how this implies the convergence of the 1-RDM. This concludes
the proof of \eqref{thm:convergence} if the limits are taken in the order $N\to \infty$, and then
$\omega\to \infty$.

The first step is the following Proposition.

\begin{proposition}[\textbf{Restriction to ground states of $h_\omega$ in the large-$\omega$ limit}]\mbox{}\\\label{prop:perp_small}
  In the same assumptions of Theorem \ref{thm:convergence}, consider, as in \eqref{eq:13}, the
  decomposition
  \begin{equation*}
    \psi^{(\omega)}_{\{\theta_{m},m\in P\}}(t)=\sum_{k=1}^{2s+1}\kappa_{k,t\{\theta_{m},m\in P\}}(\omega)\varphi_k+\psi^{\perp}_{t}(\omega)\;
  \end{equation*}
  of the solution to the Hartree equation \eqref{eq:10} with initial datum
  \begin{equation*}
    \psi^{(\omega)}_{\{\theta_{m},m\in P\}}(0)=\sum_{k\in P}^{} e^{i\theta_k}\sqrt{\pi_k}\varphi_k\; .
  \end{equation*}
  Then there exists $C>0$ independent of $t$ and $\omega$ such that
  \begin{equation} \label{eq:small_perp} \|\psi^{\perp}_{t}(\omega)\|_2\le \frac{C}{\omega^{1/2}}\;.
  \end{equation}
\end{proposition}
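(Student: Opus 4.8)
The plan is to prove \eqref{eq:small_perp} by a single conservation-law estimate. Two ingredients enter: first, that the Hartree flow \eqref{eq:10} conserves the $L^2$ norm and the energy
\begin{equation*}
  \mathcal{E}_\omega[\psi]=\langle\psi,h_\omega\psi\rangle+\tfrac12\bigl\langle\psi,(V*\lvert\psi\rvert^2)\psi\bigr\rangle\,;
\end{equation*}
second, the structural facts recorded after \eqref{eq:1}, namely that $\mathcal{F}$ is exactly the zero eigenspace of $h_\omega$, so that $h_\omega$ vanishes on $\mathcal{F}$ while $\langle\phi,h_\omega\phi\rangle\ge\omega\lVert\phi\rVert_2^2$ for every $\phi\in\mathcal{F}^\perp$ in the form domain. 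First I would bound the initial energy: since $\psi^{(\omega)}_{\{\theta_m,m\in P\}}(0)=\sum_{k\in P}e^{i\theta_k}\sqrt{\pi_k}\varphi_k$ belongs to $\mathcal{F}=\ker h_\omega$ and has unit $L^2$ norm (because $\sum_{k\in P}\pi_k=1$ by \eqref{eq:f_pi'}), its kinetic term vanishes, and applying \eqref{eq:bound_V1} with $\varepsilon=1$ to $\psi(0)\otimes\psi(0)$ gives
\begin{equation*}
  \mathcal{E}_\omega\bigl[\psi^{(\omega)}_{\{\theta_m,m\in P\}}(0)\bigr]=\tfrac12\bigl\langle\psi(0),(V*\lvert\psi(0)\rvert^2)\psi(0)\bigr\rangle\le\tfrac12\,C_1\;,
\end{equation*}
a bound uniform in $\omega$ and in the phases $\theta_m$ — here it is crucial that Assumption \ref{assum:V} furnishes $C_\varepsilon$ independently of $\omega$.

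Next I would propagate the estimate. By energy and norm conservation, $\mathcal{E}_\omega[\psi^{(\omega)}_{\{\theta_m,m\in P\}}(t)]\le\tfrac12 C_1$ and $\lVert\psi^{(\omega)}_{\{\theta_m,m\in P\}}(t)\rVert_2=1$ for all $t\ge0$. Applying \eqref{eq:bound_V1} once more (with $\varepsilon=1$) gives the lower bound $\langle\psi(t),(V*\lvert\psi(t)\rvert^2)\psi(t)\rangle\ge-\langle\psi(t),h_\omega\psi(t)\rangle-C_1$, and inserting it into the conserved energy yields
\begin{equation*}
  \tfrac12 C_1\ \ge\ \mathcal{E}_\omega[\psi(t)]\ \ge\ \tfrac12\,\langle\psi(t),h_\omega\psi(t)\rangle-\tfrac12 C_1\;,
\end{equation*}
so $\langle\psi^{(\omega)}_{\{\theta_m,m\in P\}}(t),h_\omega\psi^{(\omega)}_{\{\theta_m,m\in P\}}(t)\rangle\le 2C_1$, uniformly in $t$, $\omega$, and the $\theta_m$'s. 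Decomposing $\psi(t)=\sum_{k=1}^{2s+1}\kappa_{k,t,\{\theta_m,m\in P\}}(\omega)\varphi_k+\psi^{\perp}_t(\omega)$ as in \eqref{eq:13}, using that $h_\omega$ annihilates the $\mathcal{F}$-part and that the cross terms vanish by self-adjointness, one gets $\langle\psi(t),h_\omega\psi(t)\rangle=\langle\psi^{\perp}_t(\omega),h_\omega\psi^{\perp}_t(\omega)\rangle\ge\omega\lVert\psi^{\perp}_t(\omega)\rVert_2^2$. Combining the last two displays gives $\lVert\psi^{\perp}_t(\omega)\rVert_2\le\sqrt{2C_1}\,\omega^{-1/2}$, which is \eqref{eq:small_perp} with $C=\sqrt{2C_1}$.

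The inequality chain is short; what needs care is purely analytic. I would need global well-posedness of \eqref{eq:10} in the form domain $\mathcal{D}[h_\omega]$ together with the \emph{conservation} of $\mathcal{E}_\omega$ along the flow — i.e. enough regularity of $t\mapsto\psi^{(\omega)}_{\{\theta_m,m\in P\}}(t)$ to legitimately differentiate the energy. This follows from the subcriticality encoded in \eqref{eq:bound_V1}--\eqref{eq:bound_V2} by standard arguments, and is already implicitly used in Theorem \ref{thm:mf_fragm}, which asserts existence and uniqueness of the solution. I therefore expect the only real obstacle to be the careful bookkeeping of the conservation laws and of the $\omega$-uniformity of all constants; the estimate itself is bought entirely from three hypotheses, all explicit in Assumptions \ref{assum:h}--\ref{assum:V}: that the ground-state vector $\underline\varphi$ (hence $\mathcal{F}$) is independent of $\omega$, that the gap of $h_\omega$ above $\mathcal{F}$ equals $\omega$, and that $C_\varepsilon$ in \eqref{eq:bound_V1} is $\omega$-uniform.
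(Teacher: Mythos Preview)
Your proof is correct and follows essentially the same route as the paper's: bound the potential energy via \eqref{eq:bound_V1}, use conservation of the Hartree energy to control $\langle\psi(t),h_\omega\psi(t)\rangle$ uniformly in $t$ and $\omega$, and then invoke the gap to extract $\omega\lVert\psi^\perp_t(\omega)\rVert_2^2$. The only cosmetic differences are that the paper keeps $\varepsilon$ generic (with $0<\varepsilon<1$) rather than fixing $\varepsilon=1$, and omits the factor $\tfrac12$ in the energy functional; your inclusion of the $\tfrac12$ is what makes the choice $\varepsilon=1$ admissible.
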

\begin{proof}We will show that the Hartree energy
  \begin{equation}
    \mathcal{E}[\psi]=\langle\psi, h_\omega\psi\rangle+\langle\psi,V*|\psi|^2\psi\rangle
  \end{equation}
  suitably controls the norm $\|\psi^{\perp}_{t}(\omega)\|_2$, and then use conservation of
  $\mathcal{E}[\psi^{(\omega)}_{\{\theta_{m},m\in P\}}(t)]$ along the Hamiltonian flow of \eqref{eq:10}.
	
  By the assumption \eqref{eq:bound_V1} on $V$ we have
  \begin{equation} \label{eq:control_V,h} \big|\langle\psi, V*|\psi|^2\psi\rangle\big|\le
    \langle\psi\otimes\psi,\big[ {\varepsilon} h_{\omega,x}\otimes \mathcal{I}_y+C_\varepsilon \big]\psi\otimes\psi\rangle\le \varepsilon \langle\psi,
    h_\omega\psi\rangle+C_\varepsilon
  \end{equation}
  for every $\psi\in\mathcal{D}[h_\omega]$ with $\|\psi\|_2=1$. This immediately implies
  \begin{equation*}
    (1-\varepsilon) \langle\psi,h_\omega\psi\rangle\le \mathcal{E}[\psi]+C_\varepsilon\; .
  \end{equation*}
  Now, since $\mathcal{E}[\psi^{(\omega)}_{\{\theta_{m},m\in
    P\}}(t)]=\mathcal{E}[\psi^{(\omega)}_{\{\theta_{m},m\in P\}}(0)]$, we deduce
  \begin{equation} \label{eq:upper} (1-\varepsilon)\langle\psi^{(\omega)}_{\{\theta_{m},m\in
      P\}}(t),h_\omega\psi^{(\omega)}_{\{\theta_{m},m\in P\}}(t)\rangle\le
    \mathcal{E}[\psi^{(\omega)}_{\{\theta_{m},m\in P\}}(0)]+C_\varepsilon\; .
  \end{equation}
  Moreover, by Assumption \ref{assum:h} $h_\omega$ vanishes on the ground states $\varphi_j$'s and has gap
  $\omega$, and therefore
  \begin{equation} \label{eq:lower} \langle\psi^{(\omega)}_{\{\theta_{m},m\in
      P\}}(t),h_\omega\psi^{(\omega)}_{\{\theta_{m},m\in P\}}(t)\rangle\ge
    \omega\|\psi^{\perp}_{t}(\omega)\|_2^2\; .
  \end{equation}
  Comparing \eqref{eq:upper} and \eqref{eq:lower}, we deduce that for any $0<\varepsilon<1$,
  \begin{equation}
    \|\psi^{\perp}_t(\omega)\|_2^2\le\,\frac{\mathcal{E}[\psi^{(\omega)}_{\{\theta_{m},m\in P\}}(0)]+C_\varepsilon}{(1-\varepsilon)\omega}\; .
  \end{equation}
  This concludes the proof, since by \eqref{eq:control_V,h} the quantity
  \begin{equation*}
    \begin{split}
      \big|\mathcal{E}[\psi^{(\omega)}_{\{\theta_{m},m\in P\}}(0)]\big|&=\big|\langle \psi^{(\omega)}_{\{\theta_{m},m\in P\}}(0), V*|\psi^{(\omega)}_{\{\theta_{m},m\in P\}}(0)|^2\psi^{(\omega)}_{\{\theta_{m},m\in P\}}(0)\rangle\big|
    \end{split}
  \end{equation*}
  is bounded by a constant independent of $\omega$.
\end{proof}

With an estimate of smallness of $\psi^\perp_t(\omega)$ in $L^2$ available, our next step is to show that
$\kappa_{j,t,\{\theta_{m},m\in P\}}(\omega)$ converges to $\kappa_{j,t,\{\theta_{m},m\in P\}}(\infty)$.

\begin{proposition}[\textbf{Convergence of orthogonal components}]\mbox{}\\ \label{prop:coefficient_convergence} In the same assumptions of Theorem \ref{thm:convergence}, let $\kappa_{j,t,\{\theta_{m},m\in P\}}(\omega)$, with $j=1,\dots,2s+1$, be the components of $\psi^{(\omega)}_{\{\theta_{m},m\in P\}}(t)$ defined by the orthogonal decomposition \eqref{eq:13}, and let $\kappa_{j,t,\{\theta_{m},m\in P\}}(\infty)$ be the solution to \eqref{eq:limit_eq}. Then for any $t\ge0$ we
  have
  \begin{equation} \label{eq:coefficient_convergence} |\kappa_{j,t,\{\theta_{m},m\in
      P\}}(\omega)-\kappa_{j,t,\{\theta_{m},m\in P\}}(\infty)|^2\le \frac{1}{\omega} e^{C|t|}\; ,
  \end{equation}
  for some $C>0$ independent on $t$ or $\omega$. As a consequence,
  \begin{equation} \label{eq:L_2_convergence} \bigg\|\;\psi^{(\omega)}_{\{\theta_{m},m\in
      P\}}(t)-\sum_{\ell=1}^{2s+1}\kappa_{\ell,t,\{\theta_{m},m\in
      P\}}(\infty)\varphi_\ell\;\bigg\|_{L^2}\;\le\;\frac{Ce^{K|t|}}{\omega^{1/2}}\; ,
  \end{equation}
\end{proposition}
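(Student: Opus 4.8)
The plan is to derive and Grönwall an ODE for the differences
\begin{equation*}
  d_{j,t}(\omega):=\kappa_{j,t,\{\theta_{m},m\in P\}}(\omega)-\kappa_{j,t,\{\theta_{m},m\in P\}}(\infty),\qquad j=1,\dots,2s+1 .
\end{equation*}
Projecting the Hartree equation \eqref{eq:10} onto $\varphi_j$ and using $h_\omega\varphi_j=0$ one gets \eqref{eq:ODE_omega} for $\kappa_{j,t}(\omega)$, while $\kappa_{j,t}(\infty)$ solves \eqref{eq:limit_eq} (a locally Lipschitz system, globally well posed because it conserves $\sum_j|\kappa_{j,t}(\infty)|^2$); since the two systems share the same initial datum, $d_{j,0}(\omega)=0$. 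Writing $\Phi_t(\infty):=\sum_{\ell=1}^{2s+1}\kappa_{\ell,t,\{\theta_{m},m\in P\}}(\infty)\varphi_\ell$ and introducing the error
\begin{equation*}
  e_t:=\psi^{(\omega)}_{\{\theta_{m},m\in P\}}(t)-\Phi_t(\infty)=\sum_{\ell=1}^{2s+1}d_{\ell,t}(\omega)\,\varphi_\ell+\psi^{\perp}_{t}(\omega),
\end{equation*}
subtracting \eqref{eq:ODE_omega} from \eqref{eq:limit_eq} gives
\begin{equation*}
  i\partial_t d_{j,t}(\omega)=\Big\langle\,\varphi_j\, ,\, \big(V*|\psi^{(\omega)}_{\{\theta_{m},m\in P\}}(t)|^2\big)\psi^{(\omega)}_{\{\theta_{m},m\in P\}}(t)-\big(V*|\Phi_t(\infty)|^2\big)\Phi_t(\infty)\,\Big\rangle .
\end{equation*}

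The heart of the proof is to bound the right-hand side by $C\|e_t\|_{2}$ with $C$ independent of $\omega$, $t$ and the phases. Expanding the cubic term by its trilinear structure around $\Phi_t(\infty)$, each of the resulting pieces carries at least one factor $e_t$, the remaining two factors being $\psi^{(\omega)}_{\{\theta_{m},m\in P\}}(t)$ or $\Phi_t(\infty)$. For a generic piece, write $\langle\varphi_j,(V*(\bar a b))c\rangle=\langle\varphi_j\otimes a,\,V(x-y)\,(c\otimes b)\rangle_{\mathcal{H}\otimes\mathcal{H}}$; Cauchy--Schwarz for the nonnegative multiplication operator $|V(x-y)|$, followed by \eqref{eq:bound_V1} used in whichever of the two variables is convenient (legitimate since $V$ is even), yields
\begin{equation*}
  \big|\langle\varphi_j,(V*(\bar a b))c\rangle\big|\le \langle\varphi_j\otimes a,|V(x-y)|\,\varphi_j\otimes a\rangle^{1/2}\,\langle c\otimes b,|V(x-y)|\,c\otimes b\rangle^{1/2}\le C\,\|a\|_{2}\,\|b\|_{2}\,\big(\langle c,h_\omega c\rangle^{1/2}+\|c\|_{2}\big),
\end{equation*}
where in the first factor the $h_\omega$-term drops thanks to $h_\omega\varphi_j=0$. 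Now $\|\psi^{(\omega)}_{\{\theta_{m},m\in P\}}(t)\|_2=\|\Phi_t(\infty)\|_2=1$ (conservation of the $L^2$-norm along \eqref{eq:10} and \eqref{eq:limit_eq}), $\langle\Phi_t(\infty),h_\omega\Phi_t(\infty)\rangle=0$, and, because $h_\omega$ annihilates the $\mathcal{F}$-component,
\begin{equation*}
  \langle\psi^{(\omega)}_{\{\theta_{m},m\in P\}}(t),h_\omega\psi^{(\omega)}_{\{\theta_{m},m\in P\}}(t)\rangle=\langle\psi^{\perp}_{t}(\omega),h_\omega\psi^{\perp}_{t}(\omega)\rangle\le C
\end{equation*}
uniformly in $\omega$ and $t$ --- this is exactly the energy estimate already obtained inside the proof of Proposition \ref{prop:perp_small}, and the same identity bounds $\langle e_t,h_\omega e_t\rangle$. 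Hence one can always place the factor $e_t$ in a slot that only sees its $L^2$-norm, so that $|\partial_t d_{j,t}(\omega)|\le C\|e_t\|_{2}\le C\big(\sum_\ell|d_{\ell,t}(\omega)|^2\big)^{1/2}+C\,\omega^{-1/2}$, the last step by Proposition \ref{prop:perp_small}.

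It remains to run Grönwall. With $u(t):=\sum_{j=1}^{2s+1}|d_{j,t}(\omega)|^2$ and $u(0)=0$, the previous bound gives $\partial_t u\le C u+C\,\omega^{-1/2}u^{1/2}$, hence $\partial_t\sqrt{u}\le \tfrac{C}{2}\sqrt{u}+\tfrac{C}{2}\,\omega^{-1/2}$ and, integrating, $u(t)\le \omega^{-1}e^{C|t|}$, which is \eqref{eq:coefficient_convergence}. Estimate \eqref{eq:L_2_convergence} then follows at once, since
\begin{equation*}
  \Big\|\psi^{(\omega)}_{\{\theta_{m},m\in P\}}(t)-\sum_{\ell=1}^{2s+1}\kappa_{\ell,t,\{\theta_{m},m\in P\}}(\infty)\varphi_\ell\Big\|_{2}=\|e_t\|_{2}\le\Big(\sum_{\ell}|d_{\ell,t}(\omega)|^2\Big)^{1/2}+\|\psi^{\perp}_{t}(\omega)\|_{2}\le\frac{C\,e^{K|t|}}{\omega^{1/2}}
\end{equation*}
by \eqref{eq:coefficient_convergence} and Proposition \ref{prop:perp_small}. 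The main obstacle is precisely the trilinear bound above: since $V$ is merely form-bounded with respect to $h_\omega$ and possibly singular, one must arrange the Cauchy--Schwarz/tensor factorization so that the singularity is absorbed by a factor with a priori bounded $h_\omega$-energy (here $\varphi_j$, or $\psi^{(\omega)}_{\{\theta_{m},m\in P\}}(t)$ via the conserved-energy bound), while the small quantity $e_t$ is only ever measured in $L^2$; note that the $U(1)$-phase average plays no role in any of this and all constants are uniform in $\{\theta_m\}$.
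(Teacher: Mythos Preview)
Your proposal is correct and follows essentially the same approach as the paper's proof: derive the ODE for the coefficient differences, control the difference of the cubic Hartree nonlinearities via the relative form bound \eqref{eq:bound_V1} together with the uniform energy estimate from Proposition~\ref{prop:perp_small}, and close with Gr\"onwall. The only cosmetic differences are that the paper splits the cubic difference into two pieces $I_j+II_j$ (adding and subtracting $(V*|\psi^{(\omega)}|^2)\Phi_t(\infty)$) and bounds $\|(V*|\psi^{(\omega)}|^2)\varphi_j\|_2$ directly, whereas you expand fully trilinearly and route everything through the Cauchy--Schwarz inequality for the nonnegative operator $|V(x-y)|$; correspondingly the paper obtains $\dot A\le C(A+\omega^{-1})$ while you arrive at $\partial_t u\le Cu+C\omega^{-1/2}u^{1/2}$ and integrate $\sqrt{u}$ instead, but both yield the same final estimate.
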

\begin{proof}
  By direct computation one finds that the coefficients $\{\kappa_{j,t,\{\theta_{m},m\in P\}}(\omega)\}$
  satisfy the system of ODE
  \begin{equation}
    \left\{
      \begin{aligned}
	&i \partial_t \kappa_{j,t,\{\theta_{m},m\in P\}}(\omega)=\biggl\langle \;\varphi_{j}\;  ,\; V*\bigl\lvert\psi^{(\omega)}_{\{\theta_{m},m\in P\}}(t) \bigr\rvert_{}^2\bigl(\psi^{(\omega)}_{\{\theta_{m},m\in P\}}(t)\bigr)\; \biggr\rangle_{} \\
	&\kappa_{j,0,\{\theta_{m},m\in P\}}(\omega)=\left\{
          \begin{aligned}
            &e^{i\theta_{j}}\sqrt{\pi_{j}}& \text{if } j\in P\\
            &0& \text{if } j\notin P
          \end{aligned}
	\right .
      \end{aligned}
    \right .\quad .
  \end{equation}
  Let us compute
  \begin{equation*}
    \begin{split}
      \partial_t& \big|\kappa_{j,t,\{\theta_{m},m\in P\}}(\omega)-\kappa_{j,t,\{\theta_{m},m\in P\}}(\infty)\big|^2\\
      \;=\;&\mathrm{Im}\bigg[ \big( \overline{\kappa_{j,t,\{\theta_{m},m\in P\}}(\omega)}-\overline{\kappa_{j,t,\{\theta_{m},m\in P\}}(\infty)} \big)  \bigg( \biggl\langle \;\varphi_{j}\;  ,\; V*\bigl\lvert\psi^{(\omega)}_{\{\theta_{m},m\in P\}}(t) \bigr\rvert_{}^2\bigl(\psi^{(\omega)}_{\{\theta_{m},m\in P\}}(t)\bigr)\; \biggr\rangle_{}\\
      & - \biggl\langle \;\varphi_{j}\;  ,\; V*\biggl\lvert \sum_{\ell=1}^{2s+1}\kappa_{\ell,t,\{\theta_{m},m\in P\}}(\infty)\varphi_{\ell}  \biggr\rvert_{}^2\biggl(\sum_{\ell'=1}^{2s+1}\kappa_{\ell',t,\{\theta_{m},m\in P\}}(\infty)\varphi_{\ell'} \biggr)\; \biggr\rangle_{}  \bigg) \bigg]\\
      =\;&\mathrm{Im}\bigg[ \big( \overline{\kappa_{j,t,\{\theta_{m},m\in P\}}(\omega)}-\overline{\kappa_{j,t,\{\theta_{m},m\in P\}}(\infty)} \big) \\
      &\qquad\times \biggl\langle \;\varphi_{j}\;  ,\; V*\bigl\lvert\psi^{(\omega)}_{\{\theta_{m},m\in P\}}(t) \bigr\rvert_{}^2\biggl(\psi^{(\omega)}_{\{\theta_{m},m\in P\}}(t)-\sum_{\ell=1}^{2s+1}\kappa_{\ell,t,\{\theta_{m},m\in P\}}(\infty)\varphi_{\ell}\biggr)\; \biggr\rangle_{}\bigg]\\
      & +\mathrm{Im}\bigg[ \big(\overline{\kappa_{j,t,\{\theta_{m},m\in P\}}(\omega)}-\overline{\kappa_{j,t,\{\theta_{m},m\in P\}}(\infty)} \big) \\
      &\qquad\times \biggl\langle \;\varphi_{j}\;  ,\; V*\biggl(\,\biggl\lvert \sum_{\ell=1}^{2s+1}\kappa_{\ell,t,\{\theta_{m},m\in P\}}(\infty)\varphi_{\ell}  \biggr\rvert_{}^2- \bigl\lvert  \psi^{(\omega)}_{\{\theta_{m},m\in P\}}(t) \bigr\lvert^2\biggl)\\
      &\qquad\qquad\times \biggl(\sum_{\ell'=1}^{2s+1}\kappa_{\ell',t,\{\theta_{m},m\in P\}}(\infty)\varphi_{\ell'} \biggr)\; \biggr\rangle_{}  \bigg]\\
      =:\;&I_j+II_j\; .
    \end{split}
  \end{equation*}
  We treat the two terms separately. For $I_j$ we write, by H\"older inequality,
  \begin{equation*}
    \begin{split}
      |I_j|\le\;&\Big|\kappa_{j,t,\{\theta_{m},m\in P\}}(\omega)-\kappa_{j,t,\{\theta_{m},m\in P\}}(\infty)\Big| \;\bigg\| V*\bigl\lvert\psi^{(\omega)}_{\{\theta_{m},m\in P\}}(t) \bigr\rvert_{}^2 \varphi_j\bigg\|_2\;\\
      &\qquad\times \bigg\|\psi^{(\omega)}_{\{\theta_{m},m\in P\}}(t)-\sum_{\ell=1}^{2s+1}\kappa_{\ell,t,\{\theta_{m},m\in P\}}(\infty)\varphi_{\ell}\bigg\|_2\; .
    \end{split}
  \end{equation*}
  By the assumption \eqref{eq:bound_V1} on $V$ one finds
  \begin{equation*}
    \bigg\| V*\bigl\lvert\psi^{(\omega)}_{\{\theta_{m},m\in P\}}(t) \bigr\rvert_{}^2 \varphi_j\bigg\|_2\le \varepsilon \langle\psi^{(\omega)}_{\{\theta_{m},m\in P\}}(t),h_\omega\psi^{(\omega)}_{\{\theta_{m},m\in P\}}(t)\rangle+C_\varepsilon\le C\; ,
  \end{equation*}
  for some constant C that does not depend on $\omega$. Moreover,
  \begin{equation*}
    \psi^{(\omega)}_{\{\theta_{m},m\in P\}}(t)-\sum_{\ell=1}^{2s+1}\kappa_{\ell,t,\{\theta_{m},m\in P\}}(\infty)\varphi_{\ell}=\sum_{\ell=1}^{2s+1}\big(\kappa_{\ell,t,\{\theta_{m},m\in P\}}(\omega)-\kappa_{\ell,t,\{\theta_{m},m\in P\}}(\infty)\big)\varphi_\ell+\psi^{\perp}_{t}(\omega).
  \end{equation*}
  Hence,
  \begin{equation} \label{eq:finalIj}
    \begin{split}
      |I_j|\;\le&\; C\,\Big|\kappa_{j,t,\{\theta_{m},m\in P\}}(\omega)-\kappa_{j,t,\{\theta_{m},m\in P\}}(\infty)\Big|\, \\
      &\qquad\times\Big[\sum_{\ell=1}^{2s+1}\Big|\kappa_{\ell,t,\{\theta_{m},m\in P\}}(\omega)-\kappa_{\ell,t,\{\theta_{m},m\in P\}}(\infty)\Big|+ \|\psi_t^\perp(\omega)\|_2 \Big]\\
      \le&\;	C\,\Big|\kappa_{j,t,\{\theta_{m},m\in P\}}(\omega)-\kappa_{j,t,\{\theta_{m},m\in P\}}(\infty)\Big|\,\\
      &\qquad\times \Big[\sum_{\ell=1}^{2s+1}\Big|\kappa_{\ell,t,\{\theta_{m},m\in P\}}(\omega)-\kappa_{\ell,t,\{\theta_{m},m\in P\}}(\infty)\Big|+ \frac{C}{\omega^{1/2}} \Big]\; ,
    \end{split}
  \end{equation}
  having used Proposition \ref{prop:perp_small} in the last inequality.
	
  The estimate of $II_j$ is analogous, and yields
  \begin{equation*}
    \begin{split}
      |II_{j}|\;\le\;&C\,\Big|\kappa_{j,t,\{\theta_{m},m\in P\}}(\omega)-\kappa_{j,t,\{\theta_{m},m\in P\}}(\infty)\Big|\,\\
      & \Big[\sum_{\ell=1}^{2s+1}\Big|\kappa_{\ell,t,\{\theta_{m},m\in P\}}(\omega)-\kappa_{\ell,t,\{\theta_{m},m\in P\}}(\infty)\Big|+ \frac{C}{\omega^{1/2}} \Big]\; .
    \end{split}
  \end{equation*}
  Denoting $A(t):=\sum_{k=1}^{2s+1}|\kappa_{k,t\{\theta_{m},m\in P\}}(\omega)-\kappa_{k,t\{\theta_{m},m\in
    P\}}(\infty)\big|^2$, we have thus proven
  \begin{equation*}
    \dot A(t)\;\le\; C\big(A(t)+\omega^{-1}\big) \; ,
  \end{equation*}
  which implies, by Grönwall's inequality and using $A(0)=0$,
  \begin{equation}
    A(t)\;\le\;\frac{1}{\omega}e^{Ct}.
  \end{equation}
  This proves \eqref{eq:coefficient_convergence}. Eq. \eqref{eq:L_2_convergence} then follows immediately
  by comparing \eqref{eq:13} with \eqref{eq:small_perp} and \eqref{eq:coefficient_convergence}.
\end{proof}

We are now able to conclude the proof of \eqref{thm:convergence} for the limits taken in the order $N\to
\infty$, $\omega\to \infty$. Recall that, from \eqref{eq:11},
\begin{equation*}
  \gamma_{\infty,t}^{(1)}(\omega)=\frac{1}{(2\pi)^\Pi}\int_0^{2\pi} \bigl\lvert \psi^{(\omega)}_{\{\theta_{m},m\in P\}}(t)\bigr\rangle\bigl\langle \psi^{(\omega)}_{\{\theta_{m},m\in P\}}(t)\bigr\rvert \, \prod_{m\in P}^{} \mathrm{d}\theta_m\;.
\end{equation*}
We will show that the difference
\begin{equation}\label{eq:difference}
  \begin{split}
    D(\omega,t):=\;&\gamma^{(1)}_{\infty,t}(\omega)-\gamma^{(1)}_{\infty,t}(\infty)\\
    =\;&\gamma^{(1)}_{\infty,t}(\omega)-\sum_{i=1}^{2s+1}K_{j,t}(\infty)\lvert\varphi_{j}\rangle\langle\varphi_{j}\rvert\\
    &-\sum_{j<\ell=1}^{2s+1}\Big(K_{j\ell,t}(\infty)\lvert\varphi_{j}\rangle\langle\varphi_{\ell}\rvert-\overline{K_{j\ell,t}(\infty)}\lvert\varphi_{\ell}\rangle\langle\varphi_{j}\rvert\Big)
  \end{split}
\end{equation}
converges to zero in trace norm, as $\omega\to\infty$. To this aim, we introduce the identity
\begin{equation*}
  \begin{split}
    \sum_{j=1}^{2s+1}K_{j,t}(\infty)\lvert\varphi_{j}&\rangle\langle\varphi_{j}\rvert+\sum_{j<\ell=1}^{2s+1}\Big(K_{j\ell,t}(\infty)\lvert\varphi_{j}\rangle\langle\varphi_{\ell}\rvert+\overline{K_{j\ell,t}(\infty)}\lvert\varphi_{\ell}\rangle\langle\varphi_{j}\rvert\Big)\\
    &\;=\;\frac{1}{(2\pi)^\Pi}\int_0^{2\pi}\,\bigl\lvert \sum_{\ell=1}^{2s+1}\kappa_{\ell,t,\{\theta_{m},m\in P\}}(\infty)\varphi_\ell\bigr\rangle\bigl\langle \sum_{\ell'=1}^{2s+1}\kappa_{\ell',t,\{\theta_{m},m\in P\}}(\infty)\varphi_{\ell'}\bigr\rvert\,\prod_{m\in P}^{} \mathrm{d}\theta_m
  \end{split}
\end{equation*}
which is deduced by expanding products inside the integral. We write
\begin{equation*}
  \begin{split}
    \mathrm{Tr}&\big|D(\omega,t)\big|\\
    =\;&\mathrm{Tr}\bigg|\frac{1}{(2\pi)^\Pi}\int_{0}^{2\pi} \Bigl(\,\bigl\lvert \psi^{(\omega)}_{\{\theta_{m},m\in P\}}(t)\bigr\rangle\bigl\langle \psi^{(\omega)}_{\{\theta_{m},m\in P\}}(t)\bigr\rvert \\
    &\qquad\qquad\qquad\qquad-\bigl\lvert \sum_{\ell=1}^{2s+1}\kappa_{\ell,t,\{\theta_{m},m\in P\}}(\infty)\varphi_\ell\bigr\rangle\bigl\langle \sum_{\ell'=1}^{2s+1}\kappa_{\ell',t,\{\theta_{m},m\in P\}}(\infty)\varphi_{\ell'}\bigr\rvert  \,\Bigl)\, \prod_{m\in P}^{} \mathrm{d}\theta_m\,\bigg|\\
    \;\le\;&\frac{1}{(2\pi)^\Pi}\int_{0}^{2\pi}\mathrm{Tr}\bigg|\bigl\lvert \psi^{(\omega)}_{\{\theta_{m},m\in P\}}(t)\bigr\rangle\bigl\langle \psi^{(\omega)}_{\{\theta_{m},m\in P\}}(t)\bigr\rvert \\
    &\qquad\qquad\qquad\qquad-\bigl\lvert \sum_{\ell=1}^{2s+1}\kappa_{\ell,t,\{\theta_{m},m\in P\}}(\infty)\varphi_\jmath\bigr\rangle\bigl\langle \sum_{\ell'=1}^{2s+1}\kappa_{\ell',t,\{\theta_{m},m\in P\}}(\infty)\varphi_{\ell'}\bigr\rvert \;\bigg| \prod_{m\in P}^{} \mathrm{d}\theta_m\\\
    \;\le\;&\frac{2}{(2\pi)^\Pi}\int_{0}^{2\pi}\bigg\|\;\psi^{(\omega)}_{\{\theta_{m},m\in P\}}(t)-\sum_{\ell=1}^{2s+1}\kappa_{\ell,t,\{\theta_{m},m\in P\}}(\infty)\varphi_\ell\;\bigg\|_{2}\,\prod_{m\in P}^{} \mathrm{d}\theta_m,
  \end{split}
\end{equation*}
having used in the last step the bound $\mathrm{Tr}|p_{\psi_1}-p_{\psi_2}|\le2\|\psi_1-\psi_2\|_{L^2}$, where $p_{\psi_i}=\lvert
\psi_i\rangle\langle\psi_i\lvert$ is the orthogonal projection on the subspace spanned by $\psi_i$. By \eqref{eq:L_2_convergence} we
deduce
\begin{equation*}
  \mathrm{Tr}\big|D(\omega,t)\big|\;\le\;\frac{Ce^{K|t|}}{\omega^{1/2}}
\end{equation*}
which proves the trace-norm limit at fixed $t$
\begin{equation*}
  \lim_{\omega\to\infty}\lim_{N\to\infty}\gamma^{(1)}_{N,t}(\omega)=\gamma_{\infty,t}^{(1)}(\infty).
\end{equation*}

\section{Proof of Theorem \ref{thm:convergence}, case $\omega\to\infty$ followed by $N\to\infty$}
\label{sec:proof-crefthm:convergence2}

Let $P_N$ be the orthogonal projection whose range is
\begin{equation}
  \mathcal{H}_{P_N}\equiv\mathrm{ran} \,P_N:={\big(\operatorname{span}\{\varphi_1,\dots,\varphi_{2s+1}\}\big)}^{\bigvee N},
\end{equation}
that is, the space spanned by many-body vectors in which all particles occupy one of the one-body orbitals
$\varphi_k$'s, $k=1,\dots,2s+1$. Let us define
\begin{equation}
  \Phi_\omega(t):=P_N\Psi(t)\equiv P_Ne^{-itH_{\omega,N}}\Psi_0
\end{equation}
and
\begin{equation}
  \Phi_\infty(t):=e^{-it P_NV_NP_N}\Psi_0\; ,
\end{equation}
having used the notation
\begin{equation*}
  V_N:=\frac{1}{N}\sum_{j<k}V(x_j-x_k)\; .
\end{equation*}
It should be kept in mind that the vector $\Psi(t)$ depends on $\omega$, even if though we do not mention
it explicitly.  We will prove that $\Phi_\infty(t)$ is the limit of $\Psi(t)$ as $\omega\to\infty$. To
this aim it is sufficient to prove that $(1-P_N)\Psi(t)$ vanishes in the norm of $\mathcal{H}_N$ as
$\omega\to\infty$ (see Proposition \ref{prop:many_body_perp} below), and that $\Phi_\omega(t)$ converges
to $\Phi_\infty(t)$ (Proposition \ref{prop:many_body_omega} below).

\begin{proposition}[\textbf{$N$-body restriction to ground states of $h_\omega$}]\mbox{}\\
  \label{prop:many_body_perp}
  In the same assumptions of Theorem \ref{thm:convergence}, let $\Psi(t)=e^{-itH_{\omega,N}}\Psi_0$, with
  initial datum
  \begin{equation*}
    \Psi_0=\varphi_1^{\otimes f_1(N)}\vee\dots\vee \varphi_{2s+1}^{\otimes f_{2s+1}(N)}.
  \end{equation*}
  Then, for any $t\in \mathbb{R}$, there exists $C>0$, independent of $\omega$ and $t$, such that
  \begin{equation} \label{eq:small_excited_manybody}
    \big\|(1-P_N)\Psi(t)\big\|_{\mathcal{H}_N}^2\le\frac{C N}{\omega}\; .
  \end{equation}
\end{proposition}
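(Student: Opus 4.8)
The plan is to mimic the strategy of Proposition \ref{prop:perp_small} at the level of the $N$-body dynamics, using conservation of the full energy $\langle \Psi(t), H_{\omega,N}\Psi(t)\rangle$ together with the gap in $\sum_j h_{\omega,j}$. First I would observe that, since $h_\omega$ annihilates each $\varphi_k$ and has gap $\omega$ above its $(2s+1)$-fold degenerate zero eigenspace, the operator $\sum_{j=1}^N h_{\omega,j}$ annihilates the range $\mathcal{H}_{P_N}$ of $P_N$ and satisfies, on the orthogonal complement, the lower bound
\begin{equation*}
  \Bigl\langle \Psi, \sum_{j=1}^N h_{\omega,j}\,\Psi\Bigr\rangle \;\ge\; \omega\,\bigl\|(1-P_N)\Psi\bigr\|_{\mathcal{H}_N}^2
\end{equation*}
for every $\Psi\in\mathcal{H}_N$ in the form domain, because each excited one-particle mode contributes at least $\omega$ while particles in $\mathcal{F}$ contribute nothing.

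The second step is the upper bound on $\langle \Psi(t), \sum_j h_{\omega,j}\,\Psi(t)\rangle$ that is uniform in $\omega$ (and of order $N$). Here I would use Assumption \ref{assum:V}, specifically the form bound \eqref{eq:bound_V1}, to control the many-body interaction $V_N=\tfrac1N\sum_{j<k}V(x_j-x_k)$ by $\varepsilon\sum_j h_{\omega,j}$ plus a constant: summing \eqref{eq:bound_V1} over pairs and dividing by $N$ gives, as quadratic forms on $\mathcal{D}[\sum_j h_{\omega,j}]$,
\begin{equation*}
  |V_N|\;\le\;\frac{\varepsilon}{N}\sum_{j<k}\bigl(h_{\omega,j}+h_{\omega,k}\bigr)\;+\;\frac{C_\varepsilon(N-1)}{2}\;\le\;\varepsilon\sum_{j=1}^N h_{\omega,j}\;+\;C_\varepsilon N.
\end{equation*}
Choosing, say, $\varepsilon=1/2$, this yields $(1-\varepsilon)\langle\Psi,\sum_j h_{\omega,j}\,\Psi\rangle\le \langle\Psi, H_{\omega,N}\Psi\rangle + C_\varepsilon N$. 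By conservation of energy along the Schrödinger flow \eqref{eq:N_body_schroedinger}, the right-hand side equals $\langle\Psi_0, H_{\omega,N}\Psi_0\rangle + C_\varepsilon N$, and since $\Psi_0\in\mathcal{H}_{P_N}$ is annihilated by $\sum_j h_{\omega,j}$, another application of the same form bound gives $\langle\Psi_0, H_{\omega,N}\Psi_0\rangle = \langle\Psi_0, V_N\Psi_0\rangle \le C_\varepsilon N$ — in particular a bound of order $N$ independent of $\omega$.

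Combining the two steps, $\omega\|(1-P_N)\Psi(t)\|_{\mathcal{H}_N}^2 \le \langle\Psi(t),\sum_j h_{\omega,j}\Psi(t)\rangle \le (1-\varepsilon)^{-1}\bigl(\langle\Psi_0,V_N\Psi_0\rangle + C_\varepsilon N\bigr)\le C N$, which is exactly \eqref{eq:small_excited_manybody}. The only genuinely delicate point is checking that the Kato–Rellich argument and the pairwise form bound can be applied on the correct operator domain $\mathcal{D}\bigl(\sum_j h_{\omega,j}\bigr)$ uniformly in $\omega$ — this is where Assumption \ref{assum:V} and the $\omega$-independence of $\mathcal{D}(h_\omega)$ from Assumption \ref{assum:h} are used — and that the constant $C$ can genuinely be taken independent of both $t$ and $\omega$; energy conservation removes the $t$-dependence, while the $\varepsilon$-form bound with $\varepsilon=1/2$ removes the $\omega$-dependence. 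I expect no essential obstacle beyond carefully tracking these uniformities; the $N$-linear growth is both unavoidable (it reflects extensivity of the energy) and harmless, since in the next Proposition it will be divided by $\omega$ before the $N\to\infty$ limit is taken.
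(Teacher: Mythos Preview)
Your proposal is correct and follows essentially the same argument as the paper: energy conservation, the form bound \eqref{eq:bound_V1} summed over pairs to control $V_N$ by $\varepsilon\sum_j h_{\omega,j}+O(N)$, and the gap estimate $\sum_j h_{\omega,j}\ge \omega(1-P_N)$. If anything, you are slightly more careful than the paper in tracking the $N$-dependence of the additive constant (the paper writes $-C_\varepsilon$ where it should be $-C_\varepsilon N$), and you bound $\langle\Psi_0,V_N\Psi_0\rangle$ via the form bound whereas the paper writes it explicitly as $\tfrac{N-1}{2}\langle\Psi_0,V(x_1-x_2)\Psi_0\rangle$ using permutation symmetry---both routes give the same $O(N)$ estimate.
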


\begin{proof}
  On the one hand, by conservation of the many-body energy we have
  \begin{equation}
    \label{eq:manybodybound}
    \langle\Psi(t),H_{\omega,N}\Psi(t)\rangle=\langle\Psi_0, H_{\omega,N}\Psi_0\rangle=\frac{N-1}{2}\langle\Psi_0,V(x_1-x_2)\Psi_0\rangle\; .
  \end{equation}
	
  On the other hand, by the relative boundedness of $V$ with respect to $h_\omega$, for every
  $\varepsilon>0$ we have
  \begin{equation*}
    \langle\Psi(t),H_{\omega,N}\Psi(t)\rangle \ge (1-\varepsilon)\sum_{j=1}^N\langle\Psi(t),h_{\omega,j}\Psi(t)\rangle-C_\varepsilon\;.
  \end{equation*}
  Since every $h_{\omega,j}$ vanishes on the range of $P_N$ we have
  \begin{equation*}
    \begin{split}
      \langle\Psi(t),H_{\omega,N}\Psi(t)\rangle\ge\;& (1-\varepsilon)\,\sum_{j=1}^{N}\langle(1-P_N)\Psi(t),h_{\omega,j}(1-P_N)\Psi(t)\rangle-C_\varepsilon
      \\\ge\;&\omega(1-\varepsilon)\|(1-P_N)\Psi(t)\|_{\mathcal{H}_N}^2-C_\varepsilon\; ,
    \end{split}
  \end{equation*}
  having used the gap of $h_\omega$ in the last step. Comparing this with \eqref{eq:manybodybound}
  concludes the proof.
\end{proof}

\begin{proposition}[\textbf{Convergence to limit $N$-body dynamics}]\mbox{}\\
  \label{prop:many_body_omega}
  In the same assumptions of Theorem \ref{thm:convergence}, for every $t\in \mathbb{R}$ fixed we have
  \begin{equation}
    \lim_{\omega\to\infty}\,\Bigl\|\Phi_\omega(t)-\Phi_\infty(t)\Bigr\|_{\mathcal{H}_{P_N}}=0\; .
  \end{equation}
\end{proposition}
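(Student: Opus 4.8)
The plan is to compare the two evolutions by a Duhamel argument, exploiting that every $h_{\omega,j}$ annihilates $\mathcal{H}_{P_N}=\mathrm{ran}\,P_N$. First I would record the two structural facts on which everything rests. By Assumption~\ref{assum:h} each $\varphi_k$ is a zero mode of $h_\omega$, so $h_{\omega,j}P_N=0$ for every $j$, and hence $P_N\sum_j h_{\omega,j}=0$. Moreover, summing the relative bound~\eqref{eq:bound_V2} (applied in the $j$-th variable) over the pairs $j<k$ and using $h_{\omega,j}P_N=0$, the free term drops out on the range of $P_N$, giving
\begin{equation*}
  \|V_N P_N\|\le C\,N
\end{equation*}
with $C$ independent of $\omega$. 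In particular $P_N V_N P_N$ is bounded and self-adjoint, so $\Phi_\infty(t)=e^{-itP_NV_NP_N}\Psi_0$ is well defined and stays in $\mathcal{H}_{P_N}$, and, passing to adjoints, the map $\psi\mapsto P_N(V_N\psi)$ extends to a bounded operator $P_N V_N$ with $\|P_N V_N\|\le C N$.

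Next, since $\Psi_0\in\mathcal{H}_{P_N}\subseteq\mathcal{D}(H_{\omega,N})$, the curve $t\mapsto\Psi(t)$ is $C^1$ with values in $\mathcal{H}_N$, hence so is $\Phi_\omega(t)=P_N\Psi(t)$, and using $P_N h_{\omega,j}=0$,
\begin{equation*}
  i\partial_t\Phi_\omega(t)=P_N V_N\Psi(t)=P_N V_N P_N\,\Phi_\omega(t)+P_N V_N(1-P_N)\Psi(t),\qquad\Phi_\omega(0)=\Psi_0,
\end{equation*}
whereas $i\partial_t\Phi_\infty(t)=P_N V_N P_N\,\Phi_\infty(t)$ with $\Phi_\infty(0)=\Psi_0$ as well. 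Subtracting and applying the variation-of-constants formula for the bounded generator $P_N V_N P_N$,
\begin{equation*}
  \Phi_\omega(t)-\Phi_\infty(t)=-\,i\int_0^t e^{-i(t-\tau)P_N V_N P_N}\,P_N V_N(1-P_N)\Psi(\tau)\,\mathrm{d}\tau\; .
\end{equation*}

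It then remains to estimate the integrand. Since $e^{-i(t-\tau)P_N V_N P_N}$ is unitary,
\begin{equation*}
  \bigl\|\Phi_\omega(t)-\Phi_\infty(t)\bigr\|\le\int_0^{|t|}\bigl\|P_N V_N(1-P_N)\Psi(\tau)\bigr\|\,\mathrm{d}\tau\le C\,N\int_0^{|t|}\bigl\|(1-P_N)\Psi(\tau)\bigr\|\,\mathrm{d}\tau\; ,
\end{equation*}
where in the last step $V_N$ has been absorbed into the bounded operator $P_N V_N$ acting on the vector $(1-P_N)\Psi(\tau)$. Invoking Proposition~\ref{prop:many_body_perp}, $\|(1-P_N)\Psi(\tau)\|\le(CN/\omega)^{1/2}$ uniformly in $\tau$, and therefore
\begin{equation*}
  \bigl\|\Phi_\omega(t)-\Phi_\infty(t)\bigr\|\le\frac{C\,N^{3/2}\,|t|}{\omega^{1/2}}\;\longrightarrow\;0\qquad\text{as }\omega\to\infty,
\end{equation*}
for each fixed $N$ and $t$, which is the claim.

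The main obstacle — and the reason the final bound grows like $N^{3/2}$, which is precisely why $N$ must be fixed before letting $\omega\to\infty$ — is the handling of the off-diagonal term $P_N V_N(1-P_N)\Psi(\tau)$. The naive route, bounding $\|V_N(1-P_N)\Psi(\tau)\|$ directly through the relative bound~\eqref{eq:bound_V2}, is useless here: it leaves the non-small contribution $\|\sum_j h_{\omega,j}\Psi(\tau)\|$, which is only $O(N)$. The right move is instead to transfer $V_N$ onto the $P_N$-side, where it is a bounded operator, and keep the factor $(1-P_N)\Psi(\tau)$, whose $\omega^{-1/2}$ smallness is supplied by Proposition~\ref{prop:many_body_perp}.
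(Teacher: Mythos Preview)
Your argument is correct and follows the same Duhamel skeleton as the paper: both derive
\begin{equation*}
  i\partial_t\Phi_\omega(t)=P_N V_N P_N\,\Phi_\omega(t)+P_N V_N(1-P_N)\Psi(t),
\end{equation*}
and integrate against the unitary group generated by $P_N V_N P_N$. The difference lies in how the off-diagonal term $P_N V_N(1-P_N)\Psi(\tau)$ is controlled. The paper applies \eqref{eq:bound_V2} at $\omega=1$ directly, producing a residual $\sup_\tau\|\sum_j h_{1,j}(1-P_N)\Psi(\tau)\|$ which it then has to handle through a continuity/closedness argument for $\sum_j h_{1,j}$; this yields convergence without an explicit rate. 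You instead observe that $V_N P_N$ is bounded with $\omega$-independent norm $\le CN$ (because $h_{\omega,j}P_N=0$ kills the unbounded part of \eqref{eq:bound_V2}), and then take adjoints to get the same bound for $P_N V_N$. This lets you estimate $\|P_N V_N(1-P_N)\Psi(\tau)\|\le CN\|(1-P_N)\Psi(\tau)\|$ in one stroke and feed in Proposition~\ref{prop:many_body_perp}, obtaining an explicit bound $C N^{3/2}|t|\,\omega^{-1/2}$. Your route is cleaner, avoids the closedness step entirely, and gives a quantitative rate; the paper's route stays closer to a direct application of the relative bound but pays for it with a softer, non-quantitative conclusion.
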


\begin{proof}
  Using the fact that $P_NH_{\omega,N}(1-P_N)=P_NV_N(1-P_N)$, we deduce
  \begin{equation*}
    i\partial_t\Phi_\omega(t)=P_NV_NP_N\Phi_\omega(t)+P_NV_N(1-P_N)\Psi(t)\; ,
  \end{equation*}
  which is equivalent to the integral equation
  \begin{equation*}
    \Phi_\omega(t)=e^{-itP_NV_NP_N}\Psi_0+\int_0^t\mathrm{d}\tau\,e^{-i(t-\tau)P_NV_NP_N}P_NV_N(1-P_N)\Psi(s)\; .
  \end{equation*}
  Comparing with the definition of $\Phi_\infty(t)$ we obtain, using \eqref{eq:bound_V2} with $\omega=1$ to estimate $V(x_j-x_k)$,
  \begin{equation*}
    \Bigl\|\Phi_\omega(t)-\Phi_\infty(t)\Bigr\|_{\mathcal{H}_{P_N}}\le \varepsilon\lvert t  \rvert_{}^{}\sup_{\tau\in [0,\lvert t  \rvert_{}^{}]}\Bigl\|\sum_{j=1}^{N}h_{1,j}(1-P_N)\Psi(s)\Bigr\|_{\mathcal{H}_N}+C_{\varepsilon,N}\lvert t  \rvert_{}^{}\sup_{\tau\in [0,\lvert t  \rvert_{}^{}]} \Bigl\lVert (1-P_N)\Psi(s)  \Bigr\rVert_{\mathcal{H}_N}^{}\; .
  \end{equation*}
  Now, since both $(1-P_N)\Psi(\tau)$ and $\sum_{j=1}^Nh_{1,j}(1-P_N)\Psi(\tau)$ are strongly
  continuous with respect to $\tau\in [0,\lvert t \rvert_{}^{}]$, and since $\sum_{j=1}^Nh_{1,j}$ is
  a closed operator, the result follows from \eqref{eq:small_excited_manybody}.
\end{proof}

From Proposition \ref{prop:many_body_perp} and Proposition \ref{prop:many_body_omega} we deduce the $L^2$-many-body bound
\begin{equation}
  \lim_{\omega\to\infty}\|\Psi_t-\Phi_\infty(t)\|_{\mathcal{H}_N}\equiv\lim_{\omega\to\infty}\Big\|e^{-itH_{\omega,N}}\Psi_0-e^{-itP_NV_NP_N}\Psi_0\Big\|_{\mathcal{H}_N}=0\;.
\end{equation}
Let $\gamma^{(1)}_{N,t}(\infty)$ be the 1-RDM associated to $\Phi_{\infty}(t)$. Then the above limit
implies in particular the trace-norm limit
\begin{equation*}
  \lim_{\omega\to\infty}\gamma^{(1)}_{N,t}(\omega)=\gamma^{(1)}_{N,t}(\infty).
\end{equation*}

To complete the proof of Theorem \ref{thm:convergence} it remains to prove that
$\gamma^{(1)}_{N,t}(\infty)$ converges to $\gamma_{\infty,t}^{(1)}(\infty)$ in trace norm as $N\to
\infty$. This is in fact a problem of finite-dimensional semiclassical analysis (with $N^{-1}$ as
semiclassical parameter). Consider the operator
\begin{equation*}
  W_N:=\frac{1}{N}\bigoplus _{n\in \mathbb{N}}P_N\sum_{1=j<k}^{n}V(x_j-x_k)P_N
\end{equation*}
on the Fock space $\Gamma_{\mathrm{s}}(\mathbb{C}^{2s+1})\cong L^2 (\mathbb{R}^{2s+1} )$ (where the isomorphism is intended between unitarily equivalent irreducible
representations of the C*-algebra of canonical commutation relations). On one hand it agrees with
$P_NV_NP_N$ when restricted to the sector with $n=N$, and on the other hand it is the Wick quantization of
a symbol $\sigma(\zeta,\bar{\zeta})$ on $\mathbb{C}^{2s+1}$. Such symbol $\sigma$ is, if we make the
identifications $\zeta=(\kappa_1,\dotsc,\kappa_{2s+1})$, and $u_{\zeta}=\sum_{\jmath
  =1}^{2s+1}\kappa_{\jmath}\varphi_{\jmath}\in \mathcal{H}$,
\begin{equation}
  \label{eq:15}
  \sigma(\zeta,\bar{\zeta})=\frac{1}{2}\int_{\mathbb{R}^d}^{}V(x-y)\bar{u}_{\zeta}(x)\bar{u}_{\zeta}(y)u_{\zeta}(x)u_{\zeta}(y)  \mathrm{d}x\mathrm{d}y\; .
\end{equation}
Eq.\ \eqref{eq:15} precisely defines the energy of the Hamilton-Jacobi equations \eqref{eq:limit_eq}. The
trace-norm convergence of reduced density matrices is well-known in this finite-dimensional context, and yields the sought result. This
concludes the proof of \eqref{thm:convergence}.

\section{Proof of Proposition \ref{prop:matrices}}
\label{sec:proof-crefthm:matrices}

To prove Proposition \ref{prop:matrices} it is important to better understand the combinatorial factors
appearing in $\gamma_{N,\mathrm{frag}}$ and $ \gamma_{\infty,\mathrm{frag}} ^{(p)} $. We can rewrite the wave function $
\varphi _1 ^{\otimes f_1(N)} \vee \dots \vee \varphi _{2s+1} ^{\otimes f_{2s+1}(N)} $ as
\begin{equation}
  \begin{split}
    \varphi _1 ^{\otimes f_1(N)} \vee \dotsm \vee \varphi _{2s+1} ^{\otimes f_{2s+1}(N)}
    \left(
      x _1,\ldots
      x _N
    \right):
    &=C _{\{f(N)\}}
    \sum _{\sigma \in \mathscr{S} _N}
    \prod _{n=1} ^N
    \sum _{j=1} ^{2s+1}
    \chi _{F _j} (n)
    \varphi _j (x_{\sigma(n)})
    \\
    &= C _{\{f(N)\}}
    \sum _{\sigma \in \mathscr{S} _N}
    \prod _{n=1} ^N
    \sum _{j=1} ^{2s+1}
    \chi _{\sigma(F _j)} (n)
    \varphi _j (x_n)\; ,
  \end{split}
\end{equation}
where the normalization constant is
\begin{equation*}
  C _{\{f(N)\}} = \frac{1}{{\bigl( N! \prod _{j =1} ^{2s+1} f _j !  \bigr)}^{\frac{1}{2}}},
\end{equation*}
and
\begin{equation*}
  F _j = \Bigl( \sum
  _{l=1} ^{j-1} f _l(N), \sum _{l=1} ^j f _l(N) \Bigr] \cap \mathbb{N}\quad\text{for } j \in \bigl \{ 1, \ldots, 2s+1 \bigr \}.
\end{equation*}
It follows from the orthogonality of the vectors $\varphi_j$ that cancellations occur when performing the
partial trace of $\gamma_{N,\mathrm{frag}}$; nonetheless the number of vectors $ \varphi_j $ is the same
on each side of the projection. More precisely, the integral kernel of the $p$-RDM is
\begin{equation}
  \begin{split}
    \label{eq:trace_f_calc}
    \gamma^{(p)}_{N,\mathrm{frag}}&(x_1,\dots,x_p;y_1,\dots,y_p)\\
    =\;&\big[\mathrm{Tr}_{\mathcal{H}_{N-p}} (\gamma _{N,\mathrm{frag}})\big]  \left( x _1, \ldots, x _p; y _1, \ldots, y _p \right) \\
    =\;& C _{\{f(N)\}} ^2 \sum
    _{\sigma, \tau \in \mathscr{S} _N} \left( \prod _{n=1} ^p \sum _{j=1} ^{2s+1} \chi _{\sigma(F _j)\cap\left\{1,\ldots,k\right\}} (n) \varphi _j
      (x_n) \right)
    \\&
    \times \left( \prod _{\nu=1} ^p \sum _{\gamma=1} ^{2s+1} \chi _{\tau(F _\gamma)\cap\left\{1,\ldots,k\right\}} (\nu) \overline{\varphi _\gamma (y_{\nu})}
    \right) \left( \prod _{m=k+1} ^N \sum _{l=1} ^{2s+1} \chi _{\sigma(F _l)\cap\tau(F _l)\cap\left\{k+1,\ldots,N\right\}} (m) \right).
  \end{split}
\end{equation}
The last term in the product naturally translate on a condition on the family of possible permutations
that we can choose
from. 
Let us identify, for any permutation $ \sigma \in \mathscr{S} _N $, the set $ I[\sigma] = \sigma ^{-1}
\bigl(\bigl\{ 1, \ldots, p \bigr\}\bigr) $. The permutation $ \sigma $ can be seen as a permutation from $
I[\sigma] $ to $ p $ elements, and another from the complement of $ I[\sigma] $ to (the other) $ N-p $
elements. This idea can be carried further, and it is easy to realize that every permutation $ \sigma \in
\mathscr{S} _N $ can be thought of as a triple $ \bigl( \sigma _1, \sigma _2, I \bigr) $ with $ \sigma _1
\in \mathscr{S} _p $, $ \sigma _2 \in \mathscr{S} _{N-p} $ and $ I \subseteq \bigl\{ 1, \ldots, N \bigr\}
$ with $ \# I = p $.

Now it is important to understand how the choice of $ p $ particles affects the total distribution of
particles. In this sense, one should first notice that $ f(N) $ (as a frequency distribution) is
identified uniquely by the partition of connected subsets $ F=\bigl\{ F _j,\ 1\le j \le 2s+1 \bigr\}
$. Both the family $ F \cap I[\sigma] $ and $ F \cap \bigl( \bigl\{1, \ldots, N \bigr\} \smallsetminus
I[\sigma] \bigr) $ are partitions of connected subsets, respectively of $ I [\sigma] $ and of its
complement.  To these two families we can now associate two different frequency distributions $ g $ and $
h $, respectively in the sets $ \mathcal{F} _{p,2s+1} ^N $ and $ \mathcal{F} _{N-p,2s+1} ^N $, where
\begin{equation}
  \mathcal{F} _{p,\alpha} ^N:
  =\biggl\{
  g: \left\{
    1,
    \ldots,
    \alpha
  \right\}
  \to \mathbb{N}:\
  g _j \le f _j(N),\
  \sum _{j=1} ^\alpha
  g _j =k
  \biggr\}.
\end{equation}
Denote $ F ^g := F \cap I[\sigma] $, and $ F ^h := F \cap \bigl( \bigl\{1, \ldots, N \bigr\}
\smallsetminus I[\sigma] \bigr) $. These two families of sets (and the corresponding distributions $ g $
and $ h $) are identified, in a non-unique fashion, by the set $ I[\sigma] $. It is hence possible to
rewrite \eqref{eq:trace_f_calc} as
\begin{equation}
  \begin{split}
    \gamma^{(p)}_{N,\mathrm{frag}}&(x_1,\dots,x_p;y_1,\dots,y_p)\\
    =\;&\sum _{\substack{ \sigma _1, \tau _1 \in \mathscr{S} _p,\
        \sigma _2, \tau _2 \in \mathscr{S} _{N-p} \\
        I, J \subseteq\bigl\{1,\ldots,N\bigr\},
        \#I=\#J=p \\
        \sigma _2 \bigl(F _l ^{h(I)}\bigr) =\tau _2 \bigl(F _l ^{h(J)}\bigr), 1\le l \le 2s+1 }} \mspace{-25mu}C _{\{f(N)\}} ^2\\
    &\quad\times \Big( \prod _{n=1} ^p \sum _{l=1} ^{2s+1} \chi _{\sigma _1(F _l ^{g(I)})} (n) \varphi _l (x_n) \Big) \Big( \prod
    _{\nu=1} ^p \sum _{\gamma=1} ^{2s+1} \chi _{\tau _1(F _\gamma ^{g(J)})} (\nu) \overline{\varphi _\gamma (y_{\nu})} \Big)\; .
  \end{split}
\end{equation}

To get rid of $ \sigma _2 $ and $ \tau _2 $ it is useful to compute the number of permutations that
respect the constraint in the sum, that is
\begin{multline}
  \#\left\{ \sigma _2, \tau _2 \in \mathscr{S} _{N-p}:\ \sigma _2\left( F _l ^{h(I)} \right) =\sigma
    _2\left( F _l ^{h(J)} \right),\ 1\le l \le 2s+1 \right\}=
  \\
  =\left\{
    \begin{array}{ll}
      \left(N-p\right)!
      \prod _{l=1} ^{2s+1}
      \left[
        h(I) _l
      \right]!
      \equiv C _{h(I), N-p} ^{-2}
      & \mbox{if } g(I)=g(J);\\
      0 & \mbox{else.}
    \end{array}
  \right.
\end{multline}
Moreover, since the terms of the summation do not depend directly on $ I $ and $ J $, we would like to
substitute them with the possible choices of $ g $. To do so, it is necessary to understand how many
choices of $ I $ correspond to the same distribution function $ g $. Equivalently, we should know in how
many ways it is possible to choose $ g _j $ elements in $ f _j(N) $. Clearly, the answer is given by a
binomial coefficient. Therefore, it follows that
\begin{equation}
  \begin{split}
    \gamma^{(p)}_{N,\mathrm{frag}}&(x_1,\dots,x_p;y_1,\dots,y_p)
    \\
    &=\sum _{\substack{
        \sigma, \tau \in \mathscr{S} _p \\
        g \in \mathcal{F} _{p,2s+1} ^N
      }}
    \frac{C _{\{f(N)\}} ^2}{C _{f(N)-g,N-p} ^2}
    \left[
      \prod _{j=1} ^{2s+1}
      \binom{f _j (N)}{g _j}
    \right] ^2\\
    &\quad\times
    \left(
      \prod _{n=1} ^p
      \sum _{l=1} ^{2s+1}
      \chi _{\sigma (F _l ^g)} (n)
      \varphi _l (x_n)
    \right)
    \left(
      \prod _{\nu=1} ^p
      \sum _{\gamma=1} ^{2s+1}
      \chi _{\tau (F _\gamma ^g)} (\nu)
      \overline{\varphi _\gamma (y_{\nu})}
    \right).
  \end{split}
\end{equation}

It is not difficult to see the operator corresponding to that kernel is a combination of projectors:
\begin{equation}
  \gamma^{(p)}_{N,\mathrm{frag}}
  =\sum _{g \in \mathcal{F} _{p,2s+1} ^N}
  \binom{N}{p} ^{-1}
  \left[
    \prod _{j=1} ^{2s+1}
    \binom{f _j (N)}{g _j}
  \right]
  \bigl\lvert
  \varphi_1^{\otimes g _1}
  \vee \dotsm
  \vee \varphi_{2s+1}^{\otimes g _{2s+1}}
  \bigr\rangle
  \bigl\langle
  \varphi_1^{\otimes g _1}
  \vee \dotsm
  \vee \varphi_{2s+1}^{\otimes g _{2s+1}}
  \bigr\rvert\; .
\end{equation}

To compute the limit $N\to \infty$ it is convenient to calculate the limit of each coefficient in the
sum. Recall that by Assumption \ref{assum:one_body} $ f _j = \pi _j N +o(N) $. In particular this implies
that for any strictly positive integer $ \lambda $ smaller than $ f _j $ we have $ \binom{f _j}{\lambda} =
\frac{1}{\lambda !}  N ^\lambda \bigl( \pi _j ^\lambda +o(1) \bigr) $. Hence we get
\begin{equation}
  \begin{split}
    \label{eq:trace_f_coeff_comput}
    \lim _{N\to+\infty}
    \binom{N}{p} ^{-1}
    \prod _{j=1} ^{2s+1}
    \binom{f _j (N)}{g _j}
    =\lim _{N\to+\infty}
    \binom{N}{p} ^{-1}
    \prod _{\substack{
        j\in \left\{
          1,
          \ldots,
          2s+1
        \right\}\\
        j:\ g _j\neq 0
      }}
    \binom{f _j (N)}{g _j}
    =k!
    \prod _{\substack{
        j\in \left\{
          1,
          \ldots,
          2s+1
        \right\}\\
        j:\ g _j\neq 0
      }}\frac{\pi _j ^{g _j}}{g _j!}\; .
  \end{split}
\end{equation}
From \eqref{eq:trace_f_coeff_comput} it follows that
\begin{equation}
  \begin{split}
    \gamma^{(p)}_{\infty,\mathrm{frag}}:&= \underset{N\to \infty}{\mathrm{w^{*}-} \lim}\,
    \gamma^{(p)}_{N,\mathrm{frag}}
    \\
    &=\sum _{g \in \mathcal{F} _{p,2s+1}}
    p!\left[
      \prod _{\substack{
          j\in \left\{
            1,
            \ldots,
            2s+1
          \right\}\\
          j:\ g _j\neq 0
        }}\frac{\pi _j ^{g _j}}{g _j!}
    \right]
    \bigl\lvert
    \varphi_1^{\otimes g _1}
    \vee \dotsm
    \vee \varphi_{2s+1}^{\otimes g _{2s+1}}
    \bigr\rangle
    \bigl\langle
    \varphi_1^{\otimes g _1}
    \vee \dotsm
    \vee \varphi_{2s+1}^{\otimes g _{2s+1}}
    \bigr\rvert\; .
  \end{split}
\end{equation}
The rank of the effective $p$-RDM is thus the following:
\begin{equation}
  \operatorname{Rank} \; \gamma_{\infty,\mathrm{frag}}^{(p)}=
  \# \Bigl\{
  g \in
  \mathcal{F} _p:\
  \pi _j=0
  \Rightarrow g _j=0,
  1\le j \le s
  \Bigr\}
  =\binom{p+\Pi-1}{p}\; .
\end{equation}

\appendix

\section{Systems with many bosons and semiclassical analysis}
\label{sec:mean-field-descr}

In order to better understand the main mathematical tools used throughout this paper, let us recast the
large $N$ approximation as a semiclassical problem. Let us start with a simple remark: \emph{it is always
  possible to see an $N$-particle bosonic vector $\psi\in \mathcal{H}_N=\bigvee_{j=1}^N \mathcal{H}$, where
  $\mathcal{H}$ is a separable ``one-boson'' Hilbert space (in the main body of the paper it is
  $L^2(\mathbb{R}^d,\mathbb{C}^{2s+1})$), as the only non-zero component of a vector $\Psi=(0,\dotsc,0,\psi,0,\dotsc)$ in the symmetric Fock
  space $\Gamma_{\mathrm{s}}(\mathcal{H})=\bigoplus_{n\in \mathbb{N}}\mathcal{H}_n $.}  Therefore it is possible to interpret any
$N$-particle bosonic density matrix, with $N$ fixed, as a density matrix with an arbitrary number of
identical particles and probability one of having exactly $N$ particles. It is also possible to interpret
an $N$-body Hamiltonian $H_N$ (with pair interactions) as an Hamiltonian $H$ on the Fock space that agrees
with $H_N$ \emph{on the $N$-particles sector} and that commutes with the number operator: let
\begin{equation*}
  H_N=\sum_{j=1}^Nh_j+ \frac{1}{N}\sum_{j<k}^{}V(x_j-x_k)
\end{equation*}
be the self-adjoint Hamiltonian defined by \eqref{eq:mf_Hamiltonian} (where we have omitted the $\omega$
dependence for simplicity); then the Hamiltonian $H$ on $\Gamma_{\mathrm{s}}(\mathcal{H})$ defined by
\begin{equation} \label{eq:H_fock}
  H=\int_{}^{}h(x,y)a^{*}(x)a(y)  \mathrm{d}x\mathrm{d}y +\frac{1}{2N}\int_{}^{}V(x-y)a^{*}(x)a^{*}(y)a(x)a(y)  \mathrm{d}x \mathrm{d}y\;, 
\end{equation}
where $h(x,y)$ is the integral kernel of the self-adjoint operator $h$, and $a^{*}$, $a$ are the bosonic
creation and annihilation operator-valued distributions, is self-adjoint (see, \emph{e.g.},
\cite{ginibre1970cmpren,falconi2015mpag}) and agrees with $H_N$ when restricted to $\mathcal{H}_N$ (so in
particular $H\Psi\bigr\rvert_{\mathcal{H}_N}=H_N\psi$). So for any vector $\psi\in \mathcal{H}_N$, we have $e^{-it
  H_N}\psi=e^{-it H}\Psi\bigr\rvert_{\mathcal{H}_N}$, the other components of the latter being zero. In other words,
we can study the evolution of an $N$-body system with pair interactions directly in Fock space
(restricting to vectors whose only non-zero component is in the $N$-particle sector).

The limit $N\to \infty$ can be actually recast as a classical limit. To see that, define the semiclassical parameter $\varepsilon=N^{-1}$ and observe that $H$ in \eqref{eq:H_fock} depends on $\varepsilon$, as well as the aforementioned Fock space vector $\Psi$ does. From now on, we will emphasize the $\varepsilon$-dependence by adding a subscript $_\varepsilon$. Let us also define rescaled creation and annihilation operator-valued distributions
$a^{*}_{\varepsilon}=\sqrt{\varepsilon}a^{*}$ and $a_{\varepsilon}=\sqrt{\varepsilon}a$ satisfying
\begin{equation*}
  [a_{\varepsilon}(x),a^{*}_{\varepsilon}(y)]=\varepsilon\delta(x-y)\; .
\end{equation*}
If we rewrite $H$ in terms of $a^{*}_{\varepsilon}$ and $a_{\varepsilon}$, we obtain
$H=\frac{1}{\varepsilon}H_{\varepsilon}$, where
\begin{equation*}
  H_{\varepsilon}=\int_{}^{}h(x,y)a^{*}_{\varepsilon}(x)a^{}_{\varepsilon}(y)  \mathrm{d}x\mathrm{d}y + \frac{1}{2}\int_{}^{}V(x-y)a^{*}_{\varepsilon}(x)a^{*}_{\varepsilon}(y)a_{\varepsilon}(x)a_{\varepsilon}(y)  \mathrm{d}x\mathrm{d}y
\end{equation*}
coincides with $\frac{H}{N}$, that is, the energy-per-particle operator. Therefore, the evolution of the
system is described by $e^{-i \frac{t}{\varepsilon}H_{\varepsilon}}\Psi_{\varepsilon}$, and the creation and annihilation operators
corresponding to the canonical field observables of the system satisfy ``semiclassical'' $\varepsilon$-dependent
commutation relations. In other words, the parameter $\varepsilon$ for this theory is perfectly analogous to $\hslash$ in
ordinary quantum mechanics, and the system admits therefore a semiclassical description.

The main difference with respect to usual semiclassical analysis in quantum mechanics is that in this case
the phase-space is infinite-dimensional: in fact, it can be identified with the one-boson Hilbert space
$\mathcal{H}$. This makes our theory a special case of semiclassical bosonic quantum field theories. The
latter have been, in recent years, an active field of study (see, \emph{e.g.},
\cite{ammari2008ahp,falconi2013jmp,ammari2014jsp,ammari2015asns,amour2015jfa,falconi2016arxiv,amour2016alnarxiv,ammari2017arxiv,ammari2017sima,falconi2017ccm,correggi2017ahp,correggi2017arxiv}). In
the present paper we take advantage of the techniques developed in the above papers, see in particular
Theorem \ref{thm:mf_fragm}.

In semiclassical analysis, one sees quantum observables as the quantization of classical symbols,
\emph{i.e.}\ real-valued functions on the phase space. As we already mentioned, for mean field theories
the phase space can be identified with the one-boson space $\mathcal{H}$. The quantum Hamiltonian $H_{\varepsilon}$
is the quantization of the energy functional
\begin{equation}
  \label{eq:4}
  \mathcal{E}(u)=\int_{}^{}h(x,y)\bar{u}(x)u(y)  \mathrm{d}x \mathrm{d}y + \frac{1}{2}\int_{}^{}V(x-y)\lvert u(x)  \rvert_{}^2\lvert u(y)  \rvert_{}^2  \mathrm{d}x\mathrm{d}y\; ,
\end{equation}
that is densely defined on $D(h)\subseteq \mathcal{H}$, if $h$ and $V$ satisfy
Assumptions~\ref{assum:h},~\ref{assum:V} respectively. The quantization procedure used to pass from
$\mathcal{E}(u)$ to $H_{\varepsilon}$ is the so-called Wick quantization. Roughly speaking, one should replace every
$u(x)$ with $a_{\varepsilon}(x)$, and $\bar{u}(x)$ with $a^{*}_{\varepsilon}(x)$, and then put all the creation operators
$a^{*}_{\varepsilon}$ on the left of the annihilation operators $a_{\varepsilon}$ (for a detailed account of Wick quantization
in quantum field theories, see \emph{e.g.}\ \cite{ammari2008ahp}).

With the tools of semiclassical analysis, it is possible to investigate the limit behavior $\varepsilon\to 0$,
\emph{i.e.} the mean field limit $N\to \infty$ of the $N$-body theory. The semiclassical counterpart of a
$N$-body quantum state $\gamma_{\varepsilon}$, that is, a positive trace-class operator of trace one on $\mathcal{H}_N$,
is a probability measure on $\mathcal{H}$, called a \emph{Wigner measure}. In particular, given any such
$(\gamma_{\varepsilon})_{\varepsilon\in (0,1)}$, there exists a subsequence $\varepsilon_n\to 0$ and a unique Radon Borel probability measure $\mu$
(possibly depending on the chosen subsequence), concentrated on the unit ball of $\mathcal{H}$, such that
in weak-* topology
\begin{equation}
  \label{eq:5}
  \gamma_{\infty}^{(p)}=\lim_{n\to \infty}\gamma_{\varepsilon_n}^{(p)}=\int_{\mathcal{H}}^{}\,\bigl\lvert \,\underbrace{u\otimes \dotsm\otimes u}_{p} \,\bigr\rangle\bigl\langle\, \underbrace{u\otimes \dotsm\otimes u}_{p} \,\bigr\rvert  \,\mathrm{d}\mu(u)\; ,
\end{equation}
where $\gamma_{\varepsilon}^{(p)}$ are the $p$-RDMs of the density matrix $\gamma_{\varepsilon}$. Such type of result, known as quantum
de Finetti theorem, has been proved in
\cite{stormer1969jfa,Hudson1976,ammari2011jmpa,lewin2014am}. Therefore, the effective $p$-RDM is always an
average of rank one projections of a special form, in which one projects onto the $p$-th tensor product of
a single one-boson state. It is the measure $\mu$ that induces correlations in the $p$-RDMs: if the measure
is concentrated on a point $\tilde{u}$ (or on a circle $\{e^{i\theta}\tilde{u},\theta\in [0,2\pi]\}$), there is no correlation in
the sense that the resulting $p$-RDM is of the form
\begin{equation*}
  \bigl\lvert \,\underbrace{\tilde{u}\otimes \dotsm\otimes \tilde{u}}_{p} \,\bigr\rangle\bigl\langle\, \underbrace{\tilde{u}\otimes \dotsm\otimes \tilde{u}}_{p} \,\bigr\rvert\; .
\end{equation*}

Semiclassical techniques also allow to characterize explicitly the time evolution in the mean-field
limit. Let $\gamma_{\varepsilon}(t)=e^{-i\frac{t}{\varepsilon}H_{\varepsilon}}\gamma_{\varepsilon}e^{i\frac{t}{\varepsilon}H_{\varepsilon}}$ be the many-body evolution of the
density matrix $\gamma_{\varepsilon}$. Then it follows that (see, \emph{e.g.}, \cite{ammari2015asns,ammari2016cms})
\begin{equation}
  \label{eq:8}
  \begin{split}
    \gamma_{\infty}^{(p)}(t)=\lim_{n\to \infty}\gamma_{\varepsilon_n}^{(p)}(t)&=\int_{\mathcal{H}}^{}\,\bigl\lvert \,\underbrace{u\otimes \dotsm\otimes u}_{p} \,\bigr\rangle\bigl\langle\, \underbrace{u\otimes \dotsm\otimes u}_{p} \,\bigr\rvert  \,\mathrm{d}\mu_t(u)\\
    &=\int_{\mathcal{H}}^{}\,\bigl\lvert \,\underbrace{u_t\otimes \dotsm\otimes u_t}_{p} \,\bigr\rangle\bigl\langle\, \underbrace{u_t\otimes \dotsm\otimes u_t}_{p} \,\bigr\rvert  \,\mathrm{d}\mu(u)\; ,
  \end{split}
\end{equation}
where $u_t$ is the solution of the nonlinear Hartree Cauchy problem
\begin{equation}
  \label{eq:17}
  \begin{cases}
    &i\partial_tu_t=hu_t+V*\lvert u_t \rvert_{}^2u_t\\
    &u_0=u
  \end{cases}\quad .
\end{equation}
In other words, the Wigner measure $\mu_t$ at any time $t\in \mathbb{R}$ is the pushforward by the nonlinear Hartree
flow of the initial Wigner measure $\mu$.

\end{document}